\numberwithin{figure}{section}
\numberwithin{equation}{section}
\renewenvironment{itemize}
{\begin{list}{$\bullet$}{\labelwidth0mm \leftmargin3mm %
  \itemsep0pt plus 0pt \topsep3pt \parsep1pt plus 4pt \labelsep2mm}}
{\end{list}}
\newtheorem{theorem}{Theorem}[section]
\newtheorem{proposition}[theorem]{Proposition}
\newtheorem{lemma}[theorem]{Lemma}
\theoremstyle{definition}
\newtheorem{definition}[theorem]{Definition}
\newtheorem{assumption}[theorem]{Assumption}
\theoremstyle{remark}
\newtheorem{remark}[theorem]{Remark}
\newtheorem{example}[theorem]{Example}
\numberwithin{figure}{section}
\numberwithin{equation}{section}
\newcommand*\rel@kern[1]{\kern#1\dimexpr\macc@kerna}
\newcommand*\widebar[1]{%
  \begingroup
  \def\mathaccent##1##2{%
    \rel@kern{0.8}%
    \overline{\rel@kern{-0.8}\macc@nucleus\rel@kern{0.2}}%
    \rel@kern{-0.2}%
  }%
  \macc@depth\@ne
  \let\math@bgroup\@empty \let\math@egroup\macc@set@skewchar
  \mathsurround\z@ \frozen@everymath{\mathgroup\macc@group\relax}%
  \macc@set@skewchar\relax
  \let\mathaccentV\macc@nested@a
  \macc@nested@a\relax111{#1}%
  \endgroup
}
\newcommand{\abs}[1]{\ensuremath{\left|#1\right|}}
\newcommand{\Ceq}{\ensuremath{\eqsim}}
\newcommand{\definedas}{\mathrel{:=}}
\DeclareMathOperator{\diam}{diam}
\newcommand{\edge}{\ensuremath{E}}
\newcommand{\edges}{\ensuremath{\mathcal{E}}}
\newcommand{\edgesk}[1][k]{\ensuremath{\edges_{#1}}}
\newcommand{\elm}{\ensuremath{T}\xspace}
\let\Forall=\forall
\renewcommand{\forall}{\Forall\,}
\newcommand{\forest}{\ensuremath{\mathcal{F}}}
\DeclareMathOperator{\gen}{gen}
\newcommand{\grids}{\ensuremath{\mathbb{T}}\xspace}
\newcommand{\grid}{\mathcal{T}}
\newcommand{\gridk}[1][k]{\grid_{#1}}
\newcommand{\gridz}[1][z]{\grid(#1)}
\newcommand{\hG}[1][\grid]{\ensuremath{h_{#1}}}
\newcommand{\hE}[1][\elm]{\ensuremath{h_{#1}}}
\newcommand{\ie}{\hbox{i.\,e.},\xspace}
\newcommand{\N}{\ensuremath{\mathbb{N}}}
\newcommand{\normM}[2][]{%
  #1\vert\kern-0.9pt#1\vert\kern-0.9pt#1\vert #2
  #1\vert\kern-0.9pt#1\vert\kern-0.9pt#1\vert}
\newcommand{\NVB}{\textsf{NVB}\xspace}
\newcommand{\Omegaz}[1][z]{\Omega_{#1}}
\let\Paragraph=\paragraph
\renewcommand{\paragraph}[1]{\Paragraph{\textbf{#1.}}}
\newcommand{\R}{\ensuremath{\mathbb{R}}}
\newcommand{\vertices}{\ensuremath{\mathcal{V}}}
\newcommand{\verticesk}[1][k]{\ensuremath{\mathcal{V}_{#1}}}
\newcommand{\vol}[1]{\abs{#1}}
\begin{document}

\newcommand{\email}[1]{\texttt{#1}}

\title{Distributed Newest Vertex Bisection}

\author[a]{Martin Alk\"amper}

\author[b]{Robert Kl\"ofkorn}

\affil[a]{Institut f\"ur Angewandte Analysis und
  Numerische Simulation, Fachbereich Mathematik,
  Universit\"at Stuttgart,
  Pfaffenwaldring 57, D-70569 Stutt\-gart, Germany,
  \url{www.ians.uni-stuttgart.de/nmh/}, \email{alkaemper@ians.uni-stuttgart.de}}%

\affil[b]{International Research Institute of Stavanger, 
P. O. Box 8046, 4068 Stavanger, Norway,
\url{http://www.iris.no}, \email{robert.kloefkorn@iris.no}}

\maketitle

\providecommand{\keywords}[1]{\textbf{Keywords: } #1}



\begin{abstract}
Distributed adaptive conforming refinement requires multiple iterations of the serial refinement algorithm and global communication as the refinement can be propagated over several processor boundaries. We show bounds on the maximum number of iterations. The algorithm is implemented within the software package \dune[ALUGrid].  \\

\noindent
\keywords{ Adaptive method, mesh refinement, parallel, \dune}
\end{abstract}

\section{Introduction\label{S:intro}}

Conforming finite elements over conforming, unstructured, adaptive grids have been shown to behave very well for numerical simulations of diffusive processes (\cite{BonitoNochetto:10}).  
On the other hand new computer architectures demand parallelism in algorithms and grids to reach their full potential. For an adaptive, parallel, unstructured and conforming grid we need a parallel (or distributed) refinement strategy. \\
In this paper we analyze the distributed refinement strategy called Distributed Newest Vertex Bisection. 
It is the straightforward extension \cite{Zhang:08} of the serial Newest Vertex 
Bisection (\NVB) introduced by Sewell \cite{sewell:72} and we will 
show that the parallel overhead is bounded, in particular by a 
constant independent of the number of processors (Theorem \ref{theorem:constant}). 
The Distributed \NVB as described in this paper 
has been implemented in the 
open-source package \dune[ALUGrid]~\cite{alugrid:16}  
which is a module for the \dune software framework~\cite{dunepaperII:08,dunepaperI:08}.
Domain decomposition in combination with adaptivity requires load balancing 
to equidistribute the workload. 
In \dune[ALUGrid] this is done by equilibrating the number of cells belonging to each processor. 
We make the common assumption that load balancing is done after the 
refinement algorithm is finished. 
Hence we will not consider its effect on the computational cost of the refinement algorithm.  
However, we will show that it is possible to implement the Distributed \NVB 
using modern techniques of parallel computing such as communication hiding 
to achieve excellent strong scaling on a petascale super computer. \\  

Most implementations of parallel adaptive grids use nonconforming hexahedral (or quadrilateral in 2 dimensions) cells with some mesh-balance to acquire a mesh grading required for stability estimates. Implementations of conforming adaptive parallel meshes are scarce, as especially in more than two dimensions the refinement propagation is non-trivial.\\
The Distributed \NVB refinement strategy is also implemented in the toolbox AMDiS \cite{amdis:15}, and we will show, that the bound they give on the communication ($O(\log P)$, where $P$ is the number of partitions in \cite[Sec. 2.4]{amdis:15}) is too weak for large $P$, 
unless the decomposition fulfills additional assumptions.\\
Another approach to parallel simplex refinement has been published by Rivara et al. \cite{rivara:06}. 
In this work a parallel algorithm is introduced that produces an unstructured conforming mesh, 
which is not parallel in the domain decomposition sense, but instead the whole grid 
is known on each processor and the algorithm itself is executed in parallel. 
Furthermore, the refinement strategy differs, as it is not Newest Vertex Bisection(\NVB), but Longest Edge Bisection which was introduced by Rivara in \cite{rivara:84}. 
In \cite{dolfin} a fully distributed parallel Longest Edge Bisection is implemented based on the package 
DOLFIN and reasonable scaling results up to $1024$ cores are presented, however, by
sacrificing the theoretical backing of the adaptation algorithm.

The rest of the paper is structured as follows. First we introduce \NVB with examples 
for $2$-dimensional grids. Then the \NVB refinement algorithm 
is extended to work in decomposed domains. Afterwards we analyze the 
Distributed \NVB and show bounds on the parallel overhead which 
are reflected in the numerical experiments. 
The results hold for grids of any dimension unless stated otherwise.

\medskip

\section{Newest Vertex Bisection\label{S:NVB}}

In this section we shortly introduce \NVB for
conforming triangulations in two space dimensions. It was 
introduced by Sewell \cite{sewell:72} and enhanced by Mitchell with a recursive refinement
algorithm \cite{Mitchell:88,Mitchell:89}; compare also with
\cite{Baensch:91,Maubach:95,Stevenson:08,Traxler:97}. We follow the notation of \cite{GaHeSi:15}.


\subsection{Recurrent bisection of a simplex}

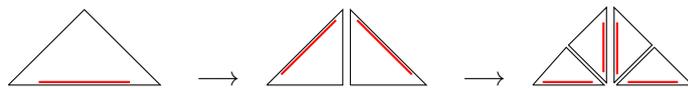
\begin{figure}[hbtp]
  \centering
  \begin{tikzpicture}[scale=2]
    \draw (0,0)--(0.5,0.5) -- (1,0)-- (0,0);
    \draw[red,thick] (0.2,0.02)--(0.8,0.02) ;
  \end{tikzpicture}
  \quad$\longrightarrow$\quad
  \begin{tikzpicture}[scale=2]
    \draw (-0.025,0)--(0.475,0.5) -- (0.475,0)-- (-0.025,0);
    \draw[red,thick] (0.07,0.07)--(0.43,0.43) ;
    \draw (0.525,0)--(0.525,0.5) -- (1.025,0)-- (0.525,0);
    \draw[red,thick] (0.57,0.43) -- (0.93,0.07) ;
  \end{tikzpicture}
  \quad$\longrightarrow$\quad
  \begin{tikzpicture}[scale=2]
    \draw (-0.01,0)--(0.21,0.25) -- (0.46,0)-- (-0.01,0);
    \draw[red,thick] (0.055,0.02)--(0.385,0.02) ;
    \draw (0.225,0.265)--(0.475,0.515) -- (0.475,0.015)-- (0.225,0.265);
    \draw[red,thick] (0.455,0.415) -- (0.455,0.085) ;
    \draw (0.525,0.015)--(0.525,0.515) -- (0.775,0.265)-- (0.525,0.015);
    \draw[red,thick] (0.545,0.415) -- (0.545,0.0715) ;
    \draw (0.54,0)--(0.79,0.25) -- (1.04,0)-- (0.54,0);
    \draw[red,thick] (0.615,0.02)--(0.945,0.02) ;
  \end{tikzpicture}
  \vskip-2mm

  \caption{\NVB: a triangle with its two children and four grandchildren.
    The refinement edges are indicated in red.}\label{F:NVB}
\end{figure}

In order to easily describe \NVB we
identify a simplex $\elm$ with its set of \emph{ordered} vertices
\begin{displaymath}
  \elm = [z_0,z_1,z_2].
\end{displaymath}
The edge between the first and last vertex we call \emph{refinement
  edge}. \NVB refines $\elm$ by inserting a new
vertex in the midpoint $\bar z = \frac12(z_0+z_2)$ of the refinement
edge $\overline{z_0z_2}$ and 
\begin{displaymath}
  \elm_1 = [z_0,\bar{z},z_1]\qquad \text{and}\qquad \elm_2 = [z_2,\bar{z},z_1]
\end{displaymath}
are the two children of $\elm$.  This procedure automatically presets
the children's refinement edges by the local ordering of their
vertices. \NVB thereby determines the refinement edge of any
descendant produced by recurrent bisection of a given initial element
$\elm_0$ from the vertex order of $\elm_0$; see Figure~\ref{F:NVB}.

Recurrent bisection induces the structure of an \emph{infinite binary
  tree} $\forest(\elm_0)$: Any node $\elm$ inside the tree is an
element generated by recurrent application of \NVB.  The two
successors of a node $\elm$ are the children $\elm_1,\elm_2$ created
by a applying \NVB to $\elm$.

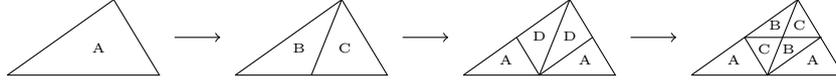
\begin{figure}[hbtp]
  \centering
  \begin{tikzpicture}[scale=2,font=\tiny]
    \draw (0,0)--(0.7,0.5) -- (1,0)-- (0,0);
    \draw[->] (1.1,0.25) -- (1.4,0.25);
    \node at (0.6,0.18) {A};
    \draw (1.5,0)--(2.2,0.5) -- (2.5,0)-- (1.5,0);
    \draw (2,0,0)--(2.2,0.5);
    \draw[->] (2.6,0.25) -- (2.9,0.25);
    \node at (1.92,0.18) {B};
    \node at (2.22,0.18) {C};
    \draw (3,0)--(3.7,0.5) -- (4,0)-- (3,0);
    \draw (3.5,0.0)--(3.7,0.5);
    \draw (3.35,0.25) -- (3.5,0.0) -- (3.85,0.25);
    \draw[->] (4.1,0.25) -- (4.4,0.25);
    \node at (3.28,0.1) {A};
    \node at (3.80,0.1) {A};
    \node at (3.5,0.26) {D};
    \node at (3.7,0.26) {D};
    \draw (4.5,0)--(5.2,0.5) -- (5.5,0)-- (4.5,0);
    \draw (5.0,0.0)--(5.2,0.5); 
    \draw (4.85,0.25) -- (5.0,0.0) -- (5.35,0.25) -- (4.85,0.25);
    \node at (4.78,0.1) {A};
    \node at (5.30,0.1) {A};
    \node at (4.98,0.17) {C};
    \node at (5.14,0.17) {B};
    \node at (5.05,0.33) {B};
    \node at (5.21,0.33) {C};
  \end{tikzpicture}
  \vskip-2mm

  \caption{\NVB: The four similarity classes for an initial element $\elm_0$.}
  \label{F:NVB-shape}
\end{figure}

Finally, \NVB produces shape regular descendants
since all $\elm\in\forest(\elm_0)$ belong to at most four similarity
classes; compare with Figure~\ref{F:NVB-shape}. This is a consequence
of the fact that \NVB always bisects the angle at the newest
vertex. In the end, any angle of any simplex is bisected at most
once. This can be extended to any dimension $d$, see e.g. \cite{Stevenson:08}.
In the following, unless specified explicitly, we assume a general $d \geq 2$.
\subsection{Recurrent refinement of triangulations with \NVB}

Let $\gridk[0]$ be a conforming and exact triangulation of a bounded
polygon $\Omega\subset\R^d$. We can refine $\gridk[0]$ or a refinement
$\grid$ of $\gridk[0]$ by applying the \NVB to selected simplices. 
More than the selected elements have to be refined when striking for
conforming triangulations. Here we refer to \cite{Mitchell:88} for a
recursive refinement algorithm and to \cite{Baensch:91} for an
iterative one.

We next introduce notations related to triangulations. The \emph{master forest} 
\begin{displaymath}
  \forest \definedas \forest(\gridk[0]) = \bigcup_{\elm_0 \in \gridk[0]} \forest(\elm_0)
\end{displaymath}
holds full information about all possible refinements of
$\gridk[0]$. We denote by $\grids=\grids(\gridk[0])$ the class of all
\emph{conforming} refinements of $\gridk[0]$.

For $\grid\in\grids$ the sets of all its vertices and edges are
$\vertices$ and $\edges$, respectively. For $\elm\in\grid$ we set
$\vertices(\elm)\definedas\vertices\cap\elm$ and for $z\in\vertices$ we define
$\gridz\definedas\{\elm\in\grid\mid z\in\elm\}$. The finite element star at a
vertex $z$ is then $\Omegaz\definedas\bigcup\{\elm \colon \elm\in\gridz\}$.  We
let $\hG\in L_\infty(\Omega)$ be the piecewise constant mesh-size
function with
${\hG}_{|\elm}=\hE\definedas\vol{\elm}^{1/2}\Ceq\diam(\elm)$ for
$\elm\in\grid$.  We use $h_{\min,\max}(\grid)$ for the smallest and
largest element size of $\grid$. We say $\elm,\elm'\in\grid$ are
\emph{direct neighbours} iff there is an $\edge \in \edges$ with $\edge \subset \elm\cap\elm'$.

Important in the course of this article is the \emph{generation} of an
element. For each $\elm\in\grid$ there is a $\elm_0\in\gridk[0]$ such
that $\elm\in \forest(\gridk[0])$. The generation $\gen(\elm)$
is the number of its ancestors in the tree
$\forest(\elm_0)$, or, equivalently, the number of bisections needed
to create $\elm$ from $\elm_0$. 

The following simple properties are useful.

\begin{lemma}\label{L:NVB}
  \begin{enumerate}[\upshape(1)]
  \item\label{h-gen} For $\elm\in\forest(\elm_0)$ with $\elm_0\in\gridk[0]$ we have
    \begin{displaymath}
      \hE = 2^{-\gen(\elm)/2} \hE[\elm_0].
    \end{displaymath}
  \item\label{elements_at_vertex} Defining
    \begin{math}
      \alpha_0 \definedas \max\{\#\gridz[z_0] \mid z_0 \in \verticesk[0]\}
    \end{math}
    we have for $d=2$ and $z\in\vertices$ the bound
    \begin{displaymath}
      \#\gridz \le
      \begin{cases}
        8& \text{if } z\in\vertices\setminus\verticesk[0],\\
        2\alpha_0 & \text{if } z\in\verticesk[0].
      \end{cases}
    \end{displaymath}
  \end{enumerate}
\end{lemma}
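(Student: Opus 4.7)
Part (1) is essentially geometric: an NVB bisection cuts $\elm$ into two simplices of equal volume, since the refinement edge is split at its midpoint. Iterating gives $\vol{\elm}=2^{-\gen(\elm)}\vol{\elm_0}$ for any $\elm\in\forest(\elm_0)$, and applying the definition $\hE=\vol{\elm}^{1/2}$ after taking a square root yields the formula. There is no hidden difficulty here.

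For part (2) the plan is a direct bookkeeping on how many triangles share $z$. I would classify each $\elm\in\gridz$ by the position of $z$ in its ordered vertex list $[z_0,z_1,z_2]$: call $z$ a \emph{middle} vertex of $\elm$ when $z=z_1$ (the vertex opposite the refinement edge), otherwise a \emph{corner} vertex. From the NVB recipe giving children $[z_0,\bar z,z_1]$ and $[z_2,\bar z,z_1]$, inspection yields two facts. Bisecting an $\elm$ in which $z$ is middle produces two children containing $z$, each with $z$ in a corner slot; so $\#\gridz$ grows by one while one middle-incidence at $z$ disappears. Bisecting an $\elm$ in which $z$ is corner produces exactly one child containing $z$, again as a corner; so $\#\gridz$ is unchanged. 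Consequently the number of middle-incidences at $z$ is non-increasing along the refinement, and $\#\gridz$ can grow by at most the initial number of middle-incidences.

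Applying this bookkeeping to the two cases is then straightforward. For $z\in\verticesk[0]$ the definition of $\alpha_0$ gives at most $\alpha_0$ triangles of $\gridk[0]$ incident to $z$, hence at most $\alpha_0$ middle-incidences to start, yielding $\#\gridz\le 2\alpha_0$. For $z\in\vertices\setminus\verticesk[0]$, the vertex $z=\bar z$ is born as the midpoint of a refinement edge, which in $d=2$ is shared by at most two triangles; at the instant $z$ appears, the (at most four) newly created children each have $\bar z$ in the middle slot, so both the initial count at $z$ and the initial middle-count are at most $4$, and the bookkeeping gives $\#\gridz\le 4+4=8$.

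The only point requiring a bit of care is that the bisection which actually creates $\bar z$ may sit at the bottom of a cascade of ancestor refinements forced by conformity. The plan is to observe that these ancestor bisections introduce their own midpoints on edges different from the one carrying $\bar z$, so the local count of at most four triangles meeting at $\bar z$ on creation is not inflated before the classification argument above takes over.
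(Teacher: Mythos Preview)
Your argument is correct and is essentially the same idea as the paper's, only spelled out in more detail. The paper dispatches part (1) identically and for part (2) simply invokes the NVB fact that ``any angle is bisected at most once''; your middle/corner bookkeeping is precisely the mechanism behind that fact, since the angle at $z$ in $\elm=[z_0,z_1,z_2]$ is bisected exactly when $z$ sits in the middle slot, and your observation that a corner vertex stays a corner in the unique child containing it is what makes the bisection happen at most once. The final paragraph about cascades is harmless but unnecessary: one can simply start the bookkeeping from the at most four children of the at most two simplices whose refinement edge has midpoint $\bar z$, and every $\elm\in\gridz[\bar z]$ is a descendant of one of these.
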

\begin{proof}
  Bisection halves the volume of a simplex. The 
  definition of $\gen(\elm)$ then gives the first claim. 
  During refinement any angle is bisected at most
  once, which yields the second assertion.
\end{proof}

The following assumption on a compatible distribution of refinement edges in
$\gridk[0]$ is instrumental in the analysis of \NVB, like the complexity
estimates in \cite{BiDaDe:04,Stevenson:08,2dnewest:13}.

\begin{assumption}[Compatibility Condition]\label{A:initial_grid}
  Suppose $\elm,\elm' \in \gridk[0]$ are direct neighbours with common edge
  $\elm\cap\elm'=\edge\in\edgesk[0]$. Then either $\edge$ is the common
  refinement edge of both $\elm$ and $\elm'$, or $\edge$ is the
  refinement edge of descendants $\elm''$ of $\elm$ and $\elm'''$ of $\elm'$ such that $\gen{\elm''} = \gen{\elm'''} < d$.
\end{assumption}

Mitchell has shown that a distribution of refinement edges, s.t. assumption \ref{A:initial_grid} holds, can be found for
any initial triangulation $\gridk[0]$ \cite[Theorem~2.9]{Mitchell:88} in 2 dimensions; compare also with
\cite[Lemma~2.1]{BiDaDe:04}.
The assumption particularly implies that any uniform
refinement of $\gridk[0]$ is conforming, \ie for any $g\in\N_0$ we
find that
\begin{math}
  \{\elm\in \forest(\gridk[0]) \mid \gen(\elm)=g\}\in\grids.
\end{math}
The proof of this property is a combination of \cite[\S4]{Traxler:97}
and \cite[Theorem~4.3]{Stevenson:08}. It is the key to show the following
property of \NVB; compare with \cite[Corollary 4.6]{Stevenson:08}.

\begin{proposition}[Characteristics of \NVB]\label{P:NVB}
  Suppose that the initial triangulation $\gridk[0]$ satisfies
  Assumption~\ref{A:initial_grid}. Let $\grid\in \grids$ be given and
  suppose that $\elm,\elm'\in \grid$ are direct neighbours such that the
  common edge $\edge=\elm\cap\elm'$ is the refinement edge of
  $\elm$. Then we either have
  $\gen(\elm')=\gen(\elm)$ and $\edge$ is also the refinement edge of
  $\elm'$, or $\gen(\elm')=\gen(\elm)-i$ with $1 \leq i < d$.
\end{proposition}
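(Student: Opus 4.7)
The plan is to derive both the generation-gap bound and the refinement-edge coincidence from a single structural fact highlighted just before the statement: for every $g\in\N_0$, the uniform refinement
$\grid_g \definedas \{\elm\in\forest(\gridk[0]) \mid \gen(\elm)=g\}$ is itself conforming. Together with \NVB's rule that bisection inserts the midpoint of the \emph{refinement} edge, this will force nearly all the assertions by an elementary conformity argument.

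First, I would rule out $\gen(\elm')>\gen(\elm)$. Set $g'\definedas\gen(\elm')$ and consider $\grid_{g'}$. Since $\elm'\in\gridl[g']$ its edges coincide with its edges in $\grid$, so $\edge$ is a full edge of $\elm'$ in $\grid_{g'}$. On the other hand $\elm$ has to be refined inside $\grid_{g'}$, and because $\edge$ is its refinement edge, the very first bisection introduces the midpoint $\bar z$ of $\edge$ as a vertex of $\grid_{g'}$. Then $\bar z$ lies in the interior of the edge $\edge$ of $\elm'\in\grid_{g'}$, contradicting the conformity of $\grid_{g'}$. Hence $\gen(\elm')\le\gen(\elm)$.

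For the remaining two cases, I would first dispose of the equal-generation case $\gen(\elm')=\gen(\elm)\definedas g$ by induction on $g$. The base case $g=0$ is exactly Assumption~\ref{A:initial_grid}: if $\elm,\elm'\in\gridk[0]$ share $\edge$ and $\edge$ is the refinement edge of $\elm$, then either $\edge$ is also the refinement edge of $\elm'$ (the desired conclusion), or only at some common descendant-generation $<d$, which is excluded because in the present equal-generation case $\elm,\elm'\in\gridk[0]$ themselves sit in $\grid$. For the inductive step I would pass to the parents $\tilde\elm,\tilde\elm'\in\gridl[g-1]$; they share a face containing $\edge$, the inductive hypothesis (applied to their common edge) controls their refinement edges, and then \NVB's explicit rule for assigning refinement edges to children (from the ordered vertex list) forces $\edge$ to be the refinement edge of $\elm'$ as well. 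Given the equal-generation case, the strict case $\gen(\elm')<\gen(\elm)$ with $i=\gen(\elm)-\gen(\elm')\ge d$ is excluded as follows: the descendant of $\elm'$ at generation $\gen(\elm)$ that shares $\edge$ with $\elm$ in $\grid_{\gen(\elm)}$ has $\edge$ as refinement edge (by the equal-generation case); climbing the parent chain back to $\elm'$, \NVB passes the refinement edge unchanged for at most $d-1$ consecutive bisections (the $d$-dimensional extension of the similarity-class analysis recalled after Lemma~\ref{L:NVB}), whence $i<d$.

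The main obstacle is the inductive step in the equal-generation case: verifying that conformity at $\edge$, plus parental refinement-edge agreement, forces the child refinement edges to coincide on $\edge$. This is really bookkeeping over \NVB's four similarity classes in 2D (and the $d$-dimensional analogues of Maubach--Traxler), and the cleanest way to execute it is to invoke the explicit tagged-simplex formulas of \cite{Maubach:95,Traxler:97,Stevenson:08}, where the refinement edge of a child is read off from the parent's ordered vertex list together with the newly inserted midpoint; the conformity of $\grid_{g-1}$ then pins down the two orderings to agree on the shared face up to a symmetry that preserves the common refinement edge.
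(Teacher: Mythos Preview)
The paper does not supply its own proof; it states that the conformity of every uniform refinement $\{\elm\in\forest(\gridk[0])\mid\gen(\elm)=g\}$ is the key ingredient and refers to \cite[Corollary~4.6]{Stevenson:08}. Your plan follows exactly this route: conformity of the uniform refinement at generation $\gen(\elm')$ rules out $\gen(\elm')>\gen(\elm)$ by a hanging-node argument, and the remaining two cases are reduced to the tagged-simplex bookkeeping of \cite{Maubach:95,Traxler:97,Stevenson:08}, which is precisely what \cite[Corollary~4.6]{Stevenson:08} does.

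Two small corrections are in order. First, in the base case $g=0$ the second alternative of Assumption~\ref{A:initial_grid} is excluded not because ``$\elm,\elm'\in\gridk[0]$ sit in $\grid$'' but simply because $\edge$ is already the refinement edge of $\elm$ at generation~$0$ and is therefore bisected immediately, so it cannot be an edge (let alone the refinement edge) of any proper descendant of $\elm$. Second, the phrase ``\NVB passes the refinement edge unchanged for at most $d-1$ consecutive bisections'' is not what you mean --- the refinement edge changes with every bisection. The correct statement is that a fixed edge of a tagged simplex becomes the refinement edge within at most $d-1$ bisections along the branch containing it, which is the type-cycle property $t\mapsto(t+1)\bmod d$ in \cite{Traxler:97,Stevenson:08}. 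With these fixes your argument is sound and, since it ultimately defers to the same references, coincides with the paper's (cited) approach.
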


A simple consequence is
$\abs{\gen(\elm)-\gen(\elm')}\le d -1 $ for direct neighbours
$\elm,\elm'\in\grid$.

\section{Distributed Newest Vertex Bisection}

The extension of \NVB to the domain decomposition case is necessary, as the decomposed grid needs to be conforming across processor/partition boundaries. The basic idea is to execute the serial algorithm on each partition, communicate the refinement status of the partition boundary to the corresponding neighbour, refine conformingly and iterate.

In \cite{Zhang:08} it is shown that this parallel Algorithm \ref{algo:DNVB} yields the same final triangulation as the serial version for meshes that fulfill the compatibility condition \ref{A:initial_grid}. This is essentially due to the fact that there is a unique mapping from the set of marked elements to the final refinement situation.\\

\begin{algorithm}[H]
\caption{Distributed Newest Vertex Bisection}
\label{algo:DNVB}
Initialize set of elements marked for refinement on each Partition $M_i$, $0\leq i<P$.\\
\While{ $M_i \neq \emptyset \forall i$}{
	Refine Partition $P_i$ using \NVB until $M_i$ is empty\\
	Communicate refinement status of partition boundary to corresponding neighbour\\
	Add nonconforming simplices to $M_i$\\
	Communicate globally, whether $M_i$ is empty.	
	}
\end{algorithm}
\medskip

We improved the algorithm introduced in \cite{Zhang:08} by additionally communicating the edge status (i.e. whether the edge has been bisected) of edges belonging to the process boundaries. This is slightly more communication expensive in $3$ or more dimensions but it reduces the number of iterations needed. For 2 dimensions both algorithms coincide as faces are always 1 dimensional.\\
The stopping criteria of the while loop requires a global communication (Allreduce, $O(\log p)$ where $p$ is the number of partitions), which cannot be expected to scale well onto many cores.
On the other hand communicating the refinement status to the neighbour can be expected to scale quite well as long as the number of neighbouring partitions stays small, which relates to the quality of the decomposition.

\section{Communication of the Distributed Newest Vertex Bisection}

Bounds for the amount of communication necessary to reach a conforming triangulation are directly related to bounding the number of iterations in Algorithm \ref{algo:DNVB}. While the first bound from Theorem \ref{theorem:maxlevel} does not need more assumptions than Compatibility Condition \ref{A:initial_grid}, Theorem \ref{theorem:constant} additionally requires dimension $d=2$ and a certain form of mesh decomposition.\\
The following first lemma helps to understand the direct consequences of a single refinement.

\begin{lemma}
\label{lemma:directclosure}
For all direct neighbours $\elm'$ of an element $\elm \in \grid$ with refinement edge $\edge$ with $\edge \subset T'\cap T$ one of the following two statements holds 
\begin{itemize}
\item[1] Refinement of $\edge$ in $\elm'$ induces no further refinement (it already is the refinement edge)
\item[2] $\edge$ is the refinement edge of a child of $\elm'$ and refinement of $\edge$ induces up to $d-1$ refinements of elements $\elm_i \subset \elm'$ with $\gen(\elm_i) < \gen(\elm) $. 
\end{itemize}
\end{lemma}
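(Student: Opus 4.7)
The lemma is essentially a reformulation of Proposition~\ref{P:NVB} phrased in terms of the concrete refinement work that must be carried out inside $\elm'$, so the plan is to reduce to that proposition and then track what happens during the forced bisections.

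First I would invoke Proposition~\ref{P:NVB} directly on the pair $\elm,\elm'$ sharing the edge $\edge$, which is by hypothesis the refinement edge of $\elm$. This gives exactly two alternatives: either $\gen(\elm')=\gen(\elm)$ with $\edge$ also the refinement edge of $\elm'$, which is precisely statement~(1) of the lemma (bisecting $\elm'$ along $\edge$ is the standard \NVB\ step and triggers no additional refinement), or $\gen(\elm')=\gen(\elm)-i$ for some $1\le i<d$, which is the situation to analyse for statement~(2).

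In the second case the strategy is to follow the unique chain of descendants of $\elm'$ that contain $\edge$ and to show that along this chain the refinement edge becomes $\edge$ exactly at generation $\gen(\elm)$. Because \NVB\ is deterministic and the vertex ordering of descendants is fixed by $\elm'$, there is a unique descendant $\tilde\elm\subset\elm'$ with $\gen(\tilde\elm)=\gen(\elm)$ that shares the face of $\elm'$ containing $\edge$; applying Proposition~\ref{P:NVB} across this face (now at equal generations) forces $\edge$ to be the refinement edge of $\tilde\elm$. To reach $\tilde\elm$ from $\elm'$ we must perform exactly $i\le d-1$ bisections, each of which is applied to an ancestor of $\tilde\elm$ and therefore has generation strictly less than $\gen(\tilde\elm)=\gen(\elm)$. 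This is exactly the bound claimed on the $\elm_i$, and $\tilde\elm$ itself is the child of $\elm'$ (in the generalized sense) whose refinement edge is $\edge$, as stated.

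The only subtle point, and the one I would be most careful about, is that after each of these $i$ intermediate bisections the newly created child that still contains $\edge$ really does inherit a vertex ordering consistent with eventually having $\edge$ as its refinement edge; this has to be read off from the \NVB\ rule $\elm=[z_0,z_1,z_2]\mapsto[z_0,\bar z,z_1],[z_2,\bar z,z_1]$ (and its $d$-dimensional analogue) together with the compatibility of refinement edges granted by Assumption~\ref{A:initial_grid}. This is the mechanism underlying Proposition~\ref{P:NVB}, so in practice I would simply cite Proposition~\ref{P:NVB} once more on the intermediate pair at generation $\gen(\elm)$ rather than redo the vertex-ordering bookkeeping. The count $i<d$ from Proposition~\ref{P:NVB} then yields the stated upper bound of $d-1$ induced refinements inside $\elm'$, finishing the proof.
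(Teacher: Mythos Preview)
Your proof is correct and follows essentially the same approach as the paper: both split into the two cases of Proposition~\ref{P:NVB} and, in the second case, count the $i\le d-1$ bisections needed to raise a descendant of $\elm'$ to generation $\gen(\elm)$, at which point its refinement edge must be $\edge$. The paper's version is terser and appeals to the compatibility condition directly rather than repeatedly citing Proposition~\ref{P:NVB}, but the underlying argument is identical.
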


\begin{proof}
Two cases:
\begin{itemize}
\item[1] $\gen(\elm) = \gen(\elm') \Rightarrow$ no further refinement. It is the refinement edge of both elements because of the compatibility condition.
\item[2] $\gen(\elm) = \gen(\elm') +i$ with $1 \leq i \leq d-1 \Rightarrow $ induces $i$ further refinements. The refinement edge can only be the same, if both elements have the same generation and direct neighbours can only differ in generation by $d$ at most.\\
The generation is increased by one with each refinement, so there have to be $i$ refinements of descendants of $\elm'$ that are refined, until the $i$-th descendant shares the refinement edge with $\elm$ and yields the conforming closure of that element.
\end{itemize} 
\end{proof}

Lemma \ref{lemma:directclosure} can be applied recursively. A single refinement may lead to refinements of direct neighbours at $1$ to $(d-1)$ generations lower and these may again lead to refinements at even lower generations.

\begin{example}\label{ex:inducedRT}
 Let us assume a simplex $\elm$ with generation $\gen(\elm)=0$ and its direct neighbour $\elm'$ with $\gen(\elm') = d-1$ and refinement edge $\edge = \elm \cap \elm'$. Now we refine $\elm'$, so we refine $\edge$ and hence $\elm$. The compatibility condition \ref{A:initial_grid} yields that $\edge$ is the refinement edge of a descendant $\elm''$ of $\elm$ with $\gen(\elm'')= d-1$. We denote by $\elm^i_\edge$ the descendant of $\elm$ with generation $i$ that contains $\edge$ and we denote its refinement edge by $\edge_i$. So $\elm'' = \elm^{d-1}_\edge$, $\elm = \elm^0_\edge$ and $\edge_{d-1} = \edge$. Then the Refinement Propagation can be depicted in the following graph of figure  \ref{fig:simpleIRT}.\\
\begin{figure}[htbp]
\begin{tikzpicture}[level distance=1.25cm]
\tikzstyle{edge from parent}=[draw=none]

\node (l3) {gen 3}
  child{ node (l2) {gen 2} 
    child{ node (l1) {gen 1} 
      child{ node (l0) {gen 0}
      }
    }
  };

\begin{scope}[xshift=20mm]
\node (el) {$\elm'$}
  child[grow=right]{node (edge) {$\edge$}
    child[grow=right,level distance=4cm]{node (elm2){$\elm''$}
      child[level distance=1.25cm,grow=down]{node (edge3){$\edge_2$}
        child[grow=down]{node (edge2) {$\edge_1$}
          child[grow=down]{node (edge1) {$\edge_0$}  }       
        }
      }
    }
  };

\draw[->,dashed] (el) -- (edge) -- (elm2);
\draw[->] (edge) --node[anchor=south,pos=.8]{$\elm^2_\edge$} (edge3);
\draw[->] (edge) --node[anchor=south,pos=.8]{$\elm^1_\edge$} (edge2);
\draw[->] (edge) --node[anchor=south,pos=.8]{$\elm$} (edge1);

\end{scope}

\end{tikzpicture}
\caption{The direct Refinement Propagation of example \ref{ex:inducedRT} with $d=4$. }
\label{fig:simpleIRT}
\end{figure}
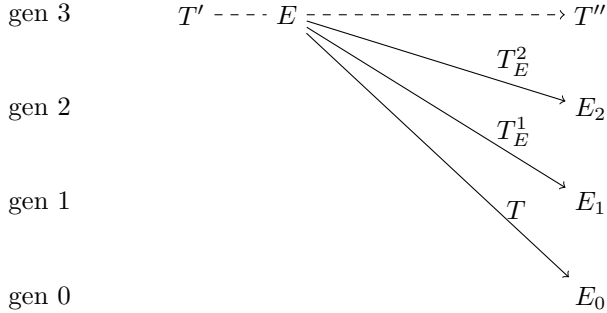
The dashed lines denote the direct closure. It cannot induce further refinement and can be neglected. More general we have to analyze the direct Refinement Propagation for any element that contains $\edge$ and additionally for all elements containing any of the edges $\edge_i$, as their refinement may also induce further refinement in the grid.
\end{example}

This leads us to the following definition.

\begin{definition}
\label{def:inducedRT}
Refinement of an element $\elm$ with refinement edge $e_0$ and generation $\gen(\elm)=l$ induces refinement propagation in form of a directed graph with root $e_0$. For any element $\elm'$ with $e_0 \subset \elm'$ and $\gen(\elm') =:l' < l$ we have a directed edge from $e_0$ to the refinement edge $\edge'_i$ of $\elm'^i_{e_0}$ for $l' \leq i < l$ as new nodes. We repeat by setting every newly introduced node as a local root.  \\
We call this the \emph{Refinement Propagation Graph}.
\end{definition}

\begin{example}
\begin{figure}[htbp]
\begin{tikzpicture}[level distance=1.25cm]
\tikzstyle{edge from parent}=[draw=none]

\Tree [.{gen l} [.{gen l-1} [.{gen l-2} {gen l-3} ] ] ]

\begin{scope}[xshift=50mm,sibling distance=1cm,
   edge from parent path={(\tikzparentnode) -- (\tikzchildnode)}]
\tikzstyle{edge from parent}=[draw=black,->]

\Tree 
[.\node (e0) {$e_0$}; 
    \edge node[auto=right] {$(T_1)^{l-1}_{e_0}$};
    [.\node (e1) {$e_1$};
    ]
    \edge[draw=none] node[auto=left] {};  
    [
       \edge[draw=none] node[auto=left] {};  
       [.\node (e2) {$e_2$};
       ]
    ]
    \edge node[auto=left] {$T_2$};
    [.\node (e3) {$e_3$};
    ]
    \edge node[auto=left] {$T_3$};
    [.\node (e4) {$e_4$};
       \edge node[auto=right] {$(T_4)^{l-2}_{e_4}$};
       [.\node (e5) {$e_5$};
       ]
          \edge[draw=none] node[auto=left] {};  
          [
             \edge[draw=none] node[auto=left] {};  
             [.\node (e6) {$e_6$};
             ]
          ]
       \edge node[auto=right] {$T_5$};
       [.\node (e7) {$e_7$};
       ]
       \edge node[auto=left] {$T_6$};
       [.\node (e8) {$e_8$};
       ]
    ]
];
\draw[->] (e0) -- (e2) node[midway,auto=right] {$T_1$};
\draw[->] (e4) -- (e6) node[midway,auto=right] {$T_4$};
\draw[->] (e3) -- (e5) node[midway,auto=left] {$T_7$};
\end{scope}

\end{tikzpicture}
\caption{An example for an Refinement Propagation Graph. }
\label{fig:IRT}
\end{figure}
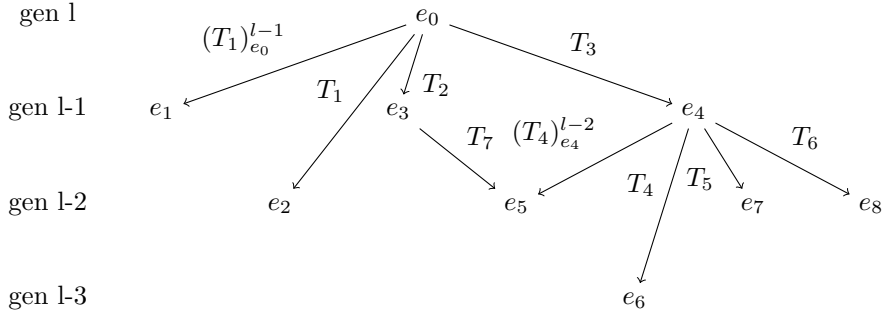

Figure \ref{fig:IRT} depicts an example of the Refinement Propagation graph of an initial refinement of an element $\elm$ with refinement edge $e_0$. For some elements the refinement results in the direct closure, so they are not included in the graph as well as $\elm$. For three direct neighbours $T_1,T_2,T_3$ of $\elm$ the refinement does not result in the direct closure. For $T_1$ it even results in an additional refinement of its child $(T_1)^{l-1}_{e_0}$ that contains $e_0$. Bisection of $e_4$ to refine $T_3$ is locally similar to the refinement of $\elm$ by $e_0$. Note that the graph is not a tree, as refinement edges may be shared ($e_5$ in our example).
All leafs do not induce further refinement, so all adjacent elements are of the same level and refinement is the direct closure, which is not included in the graph.
\end{example}

\begin{remark}
In 2 dimensions the Refinement Propagation Path consists solely of nodes of degree 2 (and the root and the leaf), since in 2 dimensions every edge is shared by exactly two elements and the direct closure is not included.  Hence in 2 dimensions we call it the \emph{Refinement Propagation Path}.
\end{remark}

With this preliminary work and exploiting the compatibility condition \ref{A:initial_grid} we state the following theorem.

\begin{theorem}\label{theorem:maxlevel}
Let $M \subset \grid$ be the set of elements marked for refinement. Then the number $N$ of iterations in the Algorithm \ref{algo:DNVB} to reach a conforming state satisfies
\[
N \leq \max_{\elm \in M}\max_{\elm'\in \grid} (\gen(\elm)-\gen(\elm')) +2.
\]
\end{theorem}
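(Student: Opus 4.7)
The plan is to interpret each iteration of Algorithm~\ref{algo:DNVB} as the processing of one cross-partition layer of the Refinement Propagation Graph (Definition~\ref{def:inducedRT}), augmented by possible terminal direct closures across partition boundaries, and to use the strict generation-decrease along directed paths of that graph to bound the number of layers. First I would observe that every directed edge in the Refinement Propagation Graph is generation-decreasing: if the parent node is the refinement edge of an element of generation $g$, then by Definition~\ref{def:inducedRT} and Lemma~\ref{lemma:directclosure}, item~2, its children are refinement edges of elements of strictly smaller generation. Chaining this along any directed path, and using that all generations lie at or above $\min_{\elm'\in\grid}\gen(\elm')$, shows that every path has at most
\[
D \definedas \max_{\elm\in M}\max_{\elm'\in\grid}\bigl(\gen(\elm)-\gen(\elm')\bigr)
\]
edges.

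Next I would peel off the algorithm layer by layer. The inner call \emph{``Refine Partition $P_i$ using \NVB until $M_i$ is empty''} exhausts all intra-partition propagation, so only propagations crossing a partition boundary persist to the next iteration. An induction on the iteration counter $k$ then shows that elements marked at the start of iteration $k+1$ correspond to propagation-graph nodes reachable from the initial roots by at least $k$ cross-partition hops. Each such hop either strictly lowers the associated generation (cascade, Lemma~\ref{lemma:directclosure}, item~2) or is a same-generation direct closure (Lemma~\ref{lemma:directclosure}, item~1); the latter produces no further hanging nodes and so can occur only once, as the terminating step of a chain. Consequently the longest chain of cross-partition hops has at most $D$ cascade edges and at most one terminal direct-closure edge, a total of at most $D+1$ edges and therefore at most $D+2$ nodes. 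Each node costs one execution of the loop body (the final execution emptying every $M_i$, whence the closing Allreduce falsifies the while condition), giving $N \leq D+2$.

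The main obstacle I anticipate is the careful bookkeeping needed to show that the distributed iterations advance exactly one boundary hop per pass. In particular, the at-most-one terminal direct closure is not encoded in Definition~\ref{def:inducedRT} and must be tracked separately; here Proposition~\ref{P:NVB} is the key structural tool, guaranteeing that once both sides of a shared edge are at the same generation and share it as their refinement edge, the bisection closes the local discrepancy without triggering new hanging nodes further along the chain.
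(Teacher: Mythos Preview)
Your proposal is correct and follows essentially the same approach as the paper: both arguments bound the depth of the Refinement Propagation Graph by the maximal generation gap $D$ using the strict generation decrease of Lemma~\ref{lemma:directclosure}, then account for one possible terminal same-generation direct closure across a partition boundary and one final iteration to detect termination. Your bookkeeping via ``cross-partition hops'' and the edges-versus-nodes count is a slightly more explicit packaging of the same idea; the paper phrases the two extra units as ``$+1$ for the direct closure not contained in the graph'' and ``$+1$ to communicate that the algorithm has finished.''
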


\begin{proof}
Let $\elm \in M$ with $\gen(\elm)=l$. We will bound the maximum depth of the Refinement Propagation Graph of $\elm$, which is an upper bound for the number of iterations as in the worst-case scenario refinement needs to be communicated at every edge.\\
Due to Lemma \ref{lemma:directclosure} refinement can be propagated at generation $l-d < \gen(\elm') < l$, in particular at generation $l-1$ and no propagation at generation $l$. So the maximum number of propagations $N_\elm$ resulting from refining $\elm$ is $l - \min_{\elm' \in \grid} \gen(\elm')$. This is clearly also the maximum depth of the Refinement Propagation graph.\\
We have to take into account, that the Refinement Propagation graph does not consider the direct closure, which could need an additional communication. It follows
\[
N_\elm \leq l - \min_{\elm' \in \grid} \gen(\elm') +1
\]
If we now take the maximum over all $T \in M$ this results in \[\max_{\elm \in M}\max_{\elm'\in \grid} (l(\elm)-l(\elm')) +1\,.\]  \\
We have to add another $1$ as Algorithm \ref{algo:DNVB} has to communicate that it has finished.
\end{proof}

The following example demonstrates that this bound is sharp and that we cannot expect anything 
better even in the simple case of 2 partitions. In particular the 
bound $O(\log p)$ from \cite[Sec. 2.4]{amdis:15} cannot hold 
without further assumptions on the decomposition.

\begin{example}
\label{ex:worstcase}
\begin{figure}[htbp]

\begin{tikzpicture}
 \path[coordinate] (0,1.5)  coordinate(A)
                --( 5,3) coordinate(B)
                --(5,0) coordinate(C)
                --(10,1.5) coordinate(D)
                --(10,4.5) coordinate(E);
        \path[coordinate] (C) -- (B) coordinate[pos=.5](CB);
    \draw [fill=white,thick](A) -- (B) -- (C) -- cycle;
    \draw [fill=white,thick](D) -- (B) -- (C) -- cycle;
    \draw [fill=green!10,thick](B) -- (E) -- (D) -- cycle;
    \draw [blue,dashed] (C) -- (E) coordinate[pos=.5] (CE);

        \path[coordinate] coordinate(X) at (B){};
        \path[coordinate] (C) -- (X) coordinate[pos=.5](CX);                          
        \draw[blue,dashed] (C) -- (CB) -- (CX);
        \draw[blue,dashed] (CE) -- (CB);
        \path[coordinate] (A) -- (B) coordinate[pos=.5](B);
        \path[coordinate] (C) -- (B) coordinate[pos=.5](CB);
        \path[coordinate] (C) -- (X) coordinate[pos=.5](CX);        
        \draw[fill=green!10,dotted] (C)--(B)--(X)--cycle;
        \draw[blue,dashed] (B) -- (CX) -- (CB);      
        \path[coordinate] (A) -- (C) coordinate[pos=.5](C);        
        \draw[dotted] (A)--(B)--(C)--cycle;

    \foreach \x in {2,3}{%
    
        \path[coordinate] coordinate(X) at (B){};
        \path[coordinate] (C) -- (X) coordinate[pos=.5](CX);                          
        \draw[blue,dashed] (C) -- (CB) -- (CX);
        \path[coordinate] (A) -- (B) coordinate[pos=.5](B);
        \path[coordinate] (C) -- (B) coordinate[pos=.5](CB);
        \path[coordinate] (C) -- (X) coordinate[pos=.5](CX);        
        \draw[fill=green!10,dotted] (C)--(B)--(X)--cycle;
        \draw[blue,dashed] (B) -- (CX) -- (CB);      
        \path[coordinate] (A) -- (C) coordinate[pos=.5](C);        
        \draw[dotted] (A)--(B)--(C)--cycle;
}
\draw[red,thick] (C) -- (CB);
\end{tikzpicture}

\caption{A distributed refined mesh to illustrate that the bound of theorem \ref{theorem:maxlevel} is sharp.
Partitions are indicated by the color of the cells. Black lines denote macro element borders. Dotted lines denote the initial refinement situation. The refinement request is marked in red. Blue dashed lines denote its Refinement Propagation.}
\label{fig:worstcase}
\end{figure}
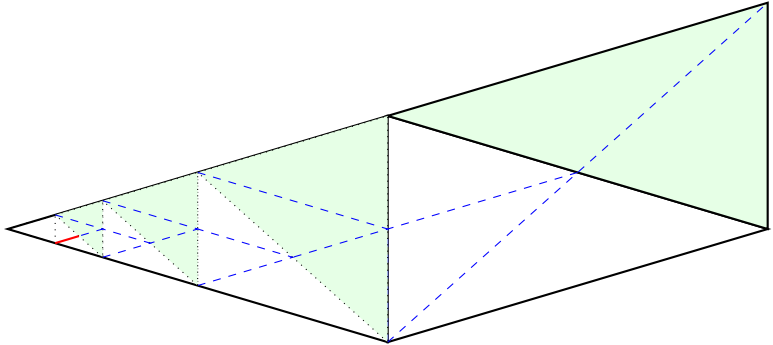
Figure \ref{fig:worstcase} shows a mesh consisting of three initial triangles fulfilling the compatibility condition \ref{A:initial_grid}.  One triangle has been refined into a corner, so the minimum generation in the mesh is $\min_{\elm'\in\grid}\gen(\elm') =0$ and the generation of the marked element $\elm \in M = \{\elm\}$ is $\gen(\elm)=6$, so the bound from Theorem \ref{theorem:maxlevel}   is $8$ iterations. Every refinement induces additional refinement at exactly one generation lower, so the refinement propagation path traverses $7$ edges. The mesh is partitioned in such a way that every edge lies on a processor boundary, which implies that after every refinement Algorithm \ref{algo:DNVB} has to stop and globally communicate. This means we get $7$ iterations, where the marked set is not empty on all processors and one final iteration to communicate, that we are finished. In total this is $8$ iterations, which is the bound predicted from Theorem \ref{theorem:maxlevel}.  \\
Distributing the elements into partitions as depicted in figure \ref{fig:worstcase} is not purely artificial. Sorting the leaf elements in vertical direction with respect to their center coordinates leads to these partitions.
\end{example}

Example \ref{ex:worstcase} demonstrates that we have to impose additional assumptions on the decomposition to expect better bounds on the number of iterations of Algorithm \ref{algo:DNVB}.\\ A reasonable assumption could be that the partitions are created by a hierarchical space-filling curve, such that elements that are close in the refinement-tree are probably on the same partition.\\
Another assumption is that the mesh is partitioned solely on elements of a generation $l^*$ which are then distributed onto partitions together with their respective refinement trees. This second assumption will be analyzed in this paper, because this is the partitioning currently implemented in our grid manager \dune[ALUGrid]. Without loss of generality we can assume that we distribute elements on the macro level (i.e. elements $\elm$ with $\gen(\elm)=0$) with their respective refinement trees. If this was not the case we could set the uniformly refined grid as our new initial grid, as for this kind of partitioning coarsening below generation $l^*$ is forbidden.\\
The following results hold for partitioning on any fixed level but only for $2$-dimensional grids, as we rely on the fact that we have a Refinement Propagation Path instead of a full graph.

\begin{theorem}
\label{theorem:constant}
Let dimension $d=2$ and the mesh be partitioned as described above. Let $z \in \vertices$  and let $N_z = \{\elm \in \gridk[0] : z \subset \elm \}$ the set of macro elements containing that vertex. For an element $\elm$ in the set of marked elements $M$ let $\elm_0(\elm) \supset \elm$ be the element of the macro grid $\gridk[0]$ that contains $\elm$. Then the number of global communications $N$ in Algorithm \ref{algo:DNVB} satisfies
\[
N \leq \max_{\elm \in M}\max_{z \in \elm_0(\elm)} \frac{3}{4} \# N_z + \frac{7}{4} \leq  \max_{z \in \gridk[0]} \frac{3}{4} \# N_z +\frac{7}{4}.
\]
\end{theorem}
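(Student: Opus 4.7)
The plan is to reduce $N$ to a combinatorial count of macro-boundary crossings along the Refinement Propagation Path of the worst-case marked element, and then to bound those crossings by a walk around a single vertex of the macro grid.

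First, I would exploit the macro-level partitioning. Since every partition owns the complete refinement tree of every macro element it contains, all propagation that stays inside one macro's refinement tree is completed within a single iteration of Algorithm~\ref{algo:DNVB} before any communication is initiated. Hence $N$ is bounded by the number of macro edges the Refinement Propagation Path of the worst-case $\elm\in M$ crosses, plus a small additive constant accounting for the initial iteration that schedules $\elm$, the terminal direct-closure step (which, by the remark after Definition~\ref{def:inducedRT}, is not counted in the graph), and the concluding Allreduce that verifies all $M_i$ are empty.

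Second, I would prove a localisation lemma: in $d=2$, after the initial step, the Refinement Propagation Path is contained in the macro star $\Omegaz[z]$ of a single macro vertex $z$ of $\elm_0(\elm)$. In 2D the propagation is a simple path along which the generation strictly decreases by one at each step (Proposition~\ref{P:NVB} together with the remark that the graph is in fact a path), and \NVB{} bisects the angle at the newest vertex. I expect these two facts together to force a common ``old'' endpoint $z$ to persist along all subsequent edges $e_i$ of the path; I would establish this by induction on the propagation index, using the four similarity classes of Figure~\ref{F:NVB-shape} to keep track of which vertex of the current element stabilises and which one is the freshly inserted midpoint.

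Third, with the path confined to the fan of macros around $z$, I would count macro-boundary crossings by tracing the path angularly around $z$. Each macro in $N_z$ is entered at most once, but because the internal refinement step inside a macro may descend more than one generation before the path exits again, only a $\tfrac{3}{4}$ fraction of the visits to macros around $z$ actually trigger a macro-boundary crossing; a case distinction by similarity class turns this averaging into the sharp bound $\tfrac{3}{4}\#N_z$. The constants collected in the first step combine to the additive $\tfrac{7}{4}$, and the second inequality of the theorem is immediate since the vertices of $\elm_0(\elm)$ are vertices of $\gridk[0]$, so enlarging the inner maximum to all $z\in\gridk[0]$ only weakens the bound.

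The principal obstacle is the localisation lemma: that the path, after leaving $\elm_0(\elm)$, really does revolve around a single vertex $z$ of $\gridk[0]$ rather than wandering across the mesh. This is a genuinely two-dimensional statement with no analogue in the proof of Theorem~\ref{theorem:maxlevel}, and it is the reason the argument is restricted to $d=2$. Once localisation is in hand, the $\tfrac34$ counting reduces to a mechanical enumeration of how the \NVB{} similarity classes of Figure~\ref{F:NVB-shape} tile the fan around $z$.
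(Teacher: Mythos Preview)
Your overall architecture matches the paper's: with macro-level partitioning in $d=2$, the number of iterations is bounded by the number of macro-edge crossings along the (genuine) refinement propagation path, plus additive constants, and the heart of the matter is to localise that path to the fan of macros around a single vertex $z$ of $\elm_0(\elm)$. Two points, however, do not go through as you describe them.

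\textbf{The localisation lemma is stated too strongly.} The path is \emph{not} in general contained in $\Omega_z$: it can traverse a macro $\elm_1\in N_z$ from the $z$-corner all the way to the edge of $\elm_1$ opposite $z$ and leave the star. The paper treats precisely this possibility (its ``part (2)'') with a volume argument: since generations decrease strictly along the path, the areas of the successive elements inside $\elm_1$ form a geometric series summing to less than $|\elm_1|$, so reaching the opposite edge forces the path to contain a generation-$0$ element, namely $\elm_1$ itself, and propagation then terminates. Thus the path exits $\Omega_z$ at most once, and this single extra macro-edge crossing is one of the two $+1$'s that combine with the $-\tfrac14$ to give the $+\tfrac74$. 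Your induction on a ``persisting old endpoint $z$'' would fail exactly here, because at this step the refinement edge of the macro is opposite $z$ and $z$ is not an endpoint of the next propagated edge.

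\textbf{The $\tfrac34$ does not come from similarity classes.} The paper does not argue that ``only a $\tfrac34$ fraction of macro visits trigger a crossing''. It instead compares the number of \emph{leaf} elements touching $z$ with the number of \emph{macro} elements touching $z$: each macro in $N_z$ contributes one leaf if its refinement edge contains $z$ and two leaves if its refinement edge is opposite $z$, so $\#\gridz[z]\le \tfrac32\,\#N_z$ up to parity. The path length around $z$ is at most $(\#\gridz[z]-1)/2$ (the maximal level drop), and the worst configuration for macro-edge traversals is when the refined macro angles are grouped so that in one direction of circumvention every step is a macro-edge crossing. This yields at most $\tfrac34\,\#N_z-\tfrac14$ crossings around $z$; together with the single possible exit from $\Omega_z$ and the final Allreduce one obtains $\tfrac34\,\#N_z+\tfrac74$. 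Your similarity-class enumeration, as sketched, would not produce this constant and does not isolate the extremal configuration.
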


\begin{proof}
The second inequality is trivial. The proof of the first inequality splits into two parts. 
\begin{enumerate}
\item Refinement propagation around vertex $z$.
\item Refinement propagation inside of macro elements that contain vertex $z$.
\end{enumerate}
We start with a similar observation as in the proof of Theorem \ref{theorem:maxlevel}. By bounding the traversals of edges of $\gridk[0]$ within the refinement propagation path, we bound the number of global communications. We count edges of the refinement propagation path that are contained in edges of $\gridk[0]$.\\
Let $\elm \in M$ with $z \in \elm_0(\elm)$ be the element to be refined. Refinement of $\elm$ leads to elements of generation $l=\gen(\elm)+1$. All refinement propagation of refinement of $\elm$ is contained in the refinement propagation of uniformly refining $\elm_0$ to this level, which we will investigate. (cf. Figure \ref{fig:spider}) \\
(1): Refinement propagation around vertex $z$.\\ We are now investigating the effect of refinement of $\elm_0$ at its vertex $z$. There are two possibilities:
\begin{itemize}
\item[a.] $z$ is opposite of the initial refinement edge of $\elm_0$. \\Then there are two leaf elements $\elm^{0,1}$ with $z \in \elm^{0,1} \subset \elm_0$ and $\gen(\elm^0)=\gen(\elm^1)$. If $\gen(\elm^{0,1})/2 \mod 2=0$, the refinement edge of $\elm^0$ and $\elm^1$ is the shared edge. If $\gen(\elm^{0,1})/2 \mod 2=1$ the refinement edge of $\elm^{0,1}$ is a subedge of an edge of $\elm_0$ containing $z$ and as we are uniformly refining up to level $l$, every odd level there is refinement propagation across these edges.
\item[b.] $z$ is contained in the initial refinement edge of $\elm_0$. \\Then there is one leaf element with $z \in \elm' \subset \elm_0$.  The refinement edge of $\elm'$ is always contained in one of the edges of $\elm_0$. So every level refinement of $\elm'$ leads to refinement propagation across one of the two edges of $\elm_0$ containing $z$.
\end{itemize} 
We know that elements around the vertex $z$ form the refinement propagation path of $\elm'$, as they all differ by one in generation and due to the argument above. The path evolves around the vertex until it encounters an element with the lowest level.  The maximum level difference around the vertex is $(\#\gridz -1) /2$ as there have to exist at least two elements with lowest generation. So the maximum length of the refinement propagation path around $z$ is $(\#\gridz -1) /2$. We want to bound the refinement propagation path edge traversals with respect to $\# N_z$. $\# N_z$ is smaller than $\#\gridz$, as for every element in $\# N_z$ where the refinement edge is opposite of $z$, there are two elements in $\#\gridz$. Now there are two cases:
\begin{enumerate}
\item[a.] $\# N_z$ is even:\\
The worst case is $\#\gridz = 3/2 \# N_z$ and all refined macro angles are neighbouring. Then, in one of the circumvention direction all edge traversals are traversals in $N_z$ and so we may need $(3/2 \# N_z -1)/2$ traversals in $N_z$ until we encounter the element with the lowest level. 
\item[b.] $\# N_z$ is odd:\\
The worst case is $\#\gridz = 3/2\# N_z+1/2 $ and all refined angles are neighbouring. Then, in one of the circumvention direction all edge traversals are traversals in $N_z$ and so we may need $(3/2 (\# N_z )+1/2 -1)/2$ traversals in $N_z$ until we encounter the element with the lowest level.
\end{enumerate}
So the number of macro edge traversals $N^e_{z}$ within the refinement propagation path around $z$ satisfies
\[
N^e_{z} \leq (3/2 (\# N_z - 1/2)/2 = 3/4 \# N_z -1/4.
\]
(2): Refinement propagation inside of macro elements that contain vertex $z$.\\
We already know that the uniform refinement propagates into all macro elements, which share $z$. So we can neglect macro elements that share an edge with $\elm_0$ as we know that their refinement propagation does not yield any additional information.\\
So we consider a macro element $\elm_1$ that contains $z$ and does not share an edge with $\elm_0$. We know that we have a refinement at vertex $z$ and want to investigate, whether we can reach the edge opposite of $z$. The other edge does not matter, as it contains $z$. \\
The size of an element of generation $l$ is $2^{-l}|\elm_0|$ (Lemma \ref{L:NVB}). The refinement propagation path consists of elements that differ by one in level. So the size of the refinement propagation path inside the element $\elm_1$ is $\sum_k 2^{-k} |\elm_1| = |\elm_1|(2^{-n} - 2^{-l})$. So to get to another element from the opposite vertex it has to include an element of level $0$, a macro element. This is only possible, if the initial refinement edge is opposite of $z$.\\
In combination with the previous result this yields
\[
N \leq (3/2\# N_z - 1/2)/2 +1 = 3/4 \# N_z + 3/4.
\]
Now we finish the proof by adding another $+1$ for communicating the final status and taking the maximum  over all vertices of $\elm_0$ and over all marked elements.
\end{proof}

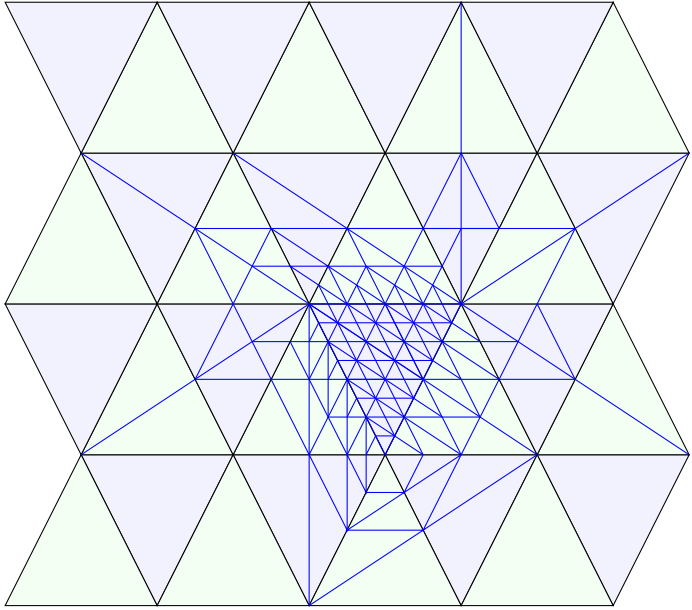
\begin{figure}[htbp]
\begin{tikzpicture}[scale=2]  

\foreach \x in {0,1,...,4}{%
\path[coordinate] (\x,0) coordinate (A\x) -- (\x+0.5,1) coordinate(B\x) -- (\x,2) coordinate (C\x) -- (\x+0.5,3)coordinate(D\x) -- (\x,4)coordinate(E\x);
}
\newcount\oldx
\foreach \x in {1,...,4}{%
\oldx =\x 
\advance\oldx by  -1
\draw[fill=blue!5] (B\the\oldx) --  (B\x) -- (A\x) --   cycle ;
\draw[fill=green!5] (A\the\oldx) -- (A\x) -- (B\the\oldx) -- cycle;
\draw[fill=green!5] (B\x) -- (C\x) -- (B\the\oldx) -- cycle;
\draw[fill=blue!5] (C\the\oldx) -- (C\x) -- (B\the\oldx) -- cycle;
\draw[fill=blue!5] (D\x) -- (C\x) -- (D\the\oldx) -- cycle ;
\draw[fill=green!5] (D\the\oldx) -- (C\x) -- (C\the\oldx) -- cycle;
\draw[fill=green!5] (D\x) -- (E\x) -- (D\the\oldx) -- cycle;
\draw[fill=blue!5] (E\the\oldx) -- (E\x) -- (D\the\oldx) -- cycle;

}

\foreach \f in {0,0.25,0.5,0.75,1}{%
	\path[coordinate] (C3) -- (C2)coordinate[pos=\f](X);
	\path[coordinate] (B2) -- (C2)coordinate[pos=\f](Y);	
	\path[coordinate] (C3) -- (B2)coordinate[pos=\f](Z);
	\path[coordinate] (B2) -- (C3)coordinate[pos=\f](ZU);

	\path[coordinate] (C3) -- (X) coordinate[pos=.5](X1);
	\path[coordinate] (C2) -- (X)coordinate[pos=.5](X2);	
	\path[coordinate] (C2) -- (Y) coordinate[pos=.5](Y1);
	\path[coordinate] (B2) -- (Y)coordinate[pos=.5](Y2);
	\path[coordinate] (C3) -- (Z) coordinate[pos=.5](Z1);
	\path[coordinate] (B2) -- (Z)coordinate[pos=.5](Z2);	
	\path[coordinate] (C3) -- (ZU) coordinate[pos=.5](ZU1);
	\path[coordinate] (B2) -- (ZU)coordinate[pos=.5](ZU2);

	\draw[blue](X) --(Y);
	\draw[blue](Y) --(ZU2);
	\draw[blue](X) --(Z1);
	\draw[blue](X1) -- (Z1);
	\draw[blue](X2) -- (Z2);
	\draw[blue](Y1) -- (ZU1);
	\draw[blue](Y2) -- (ZU2);	
}

\newcounter{curr}{1}
\setcounter{curr}{1}
\foreach \f in { 0.125, 0.25, 0.375,0.5,0.625,0.75,0.875}{%
\path[coordinate] (C3)--(C2)coordinate[pos=\f](X\thecurr) -- (B2)coordinate[pos=\f](Y\thecurr) -- (C3)coordinate[pos=\f](Z\thecurr);
\stepcounter{curr}
}

\setcounter{curr}{0}
\foreach \f in { 0.125, 0.25, 0.5, 0.75, 0.875}{%
\path[coordinate] (C3)--(B3)coordinate[pos=\f](CB3\thecurr) -- (B2)coordinate[pos=\f](BB3\thecurr) -- (B1)coordinate[pos=\f](BB2\thecurr) -- (C2)coordinate[pos=\f](BC2\thecurr) -- (D2)coordinate[pos=\f](CD3\thecurr) -- (C3)coordinate[pos=\f](DC3\thecurr);

\path[coordinate] (B3)--(C4)coordinate[pos=\f](BC4\thecurr) -- (C3)coordinate[pos=\f](CC4\thecurr) -- (D3)coordinate[pos=\f](CD4\thecurr) -- (D2)coordinate[pos=\f](DD3\thecurr) -- (D1)coordinate[pos=\f](DD2\thecurr) -- (C2)coordinate[pos=\f](DC2\thecurr) -- (C1)coordinate[pos=\f](CC2\thecurr) -- (B1)coordinate[pos=\f](CB1\thecurr) --(A2)coordinate[pos=\f](BA2\thecurr) -- (B2)coordinate[pos=\f](AB2\thecurr) -- (A3)coordinate[pos=\f](BA3\thecurr) -- (B3)coordinate[pos=\f](AB3\thecurr);
\stepcounter{curr}
}

\draw[blue] (Z4) -- (B3)coordinate[pos=.5](F1)coordinate[pos=.25](F2);
\draw[blue] (CB32) -- (Z4)coordinate[pos=.5](F3) -- (BB32)coordinate[pos=.5](F4);
\draw[blue] (CB32) -- (Z6);
\draw[blue] (BB32) -- (Z2);

\draw[blue] (CB31) -- (Z6) -- (F3) --(F1) -- (F4) -- (Z2) -- (BB33);
\draw[blue] (CB32) -- (F1) --(BB32);

\draw[blue] (Z7) -- (CB31);
\draw[blue] (Z5) -- (F3);
\draw[blue] (Z3) -- (F4);
\draw[blue] (Z1) -- (BB33);

\draw[blue] (C3) -- (CD32)coordinate[pos=.5](G1)coordinate[pos=.25](G2);
\draw[blue] (X4) -- (CD32)coordinate[pos=.5](G3) -- (DC32)coordinate[pos=.5](G4);

\draw[blue] (CD31) -- (X4)coordinate[pos=.5](G5) -- (G1)coordinate[pos=.5](G6) -- ( DC32);

\draw[blue] (CD31) -- (X6);
\draw[blue] (G1) -- (X2);
\draw[blue] (CD31) -- (G3) -- (G1) -- (DC33);

\draw[blue] (CD30) -- (X6) -- (G5) -- (G3) -- (G6) -- (X2) -- (G2) -- (DC33);

\draw[blue] (X7) -- (CD30);
\draw[blue] (X5) -- (G5);
\draw[blue] (X3) -- (G6);
\draw[blue] (X1) -- (G2);

\draw[blue] (C2) -- (BB22)coordinate[pos=.5](H1)coordinate[pos=.25](H2);
\draw[blue] (Y4) -- (BB22)coordinate[pos=.5](H3) -- (BC22)coordinate[pos=.5](H4);

\draw[blue] (BB21) -- (Y4)coordinate[pos=.5](H5) -- (H1)coordinate[pos=.5](H6) -- ( BC22);

\draw[blue] (BB21) -- (Y6);
\draw[blue] (H1) -- (Y2);
\draw[blue] (BB21) -- (H3) -- (H1) -- (BC23);

\draw[blue] (BB20) -- (Y6) -- (H5) -- (H3) -- (H6) -- (Y2) -- (H2) -- (BC23);

\draw[blue] (Y7) -- (BB20);
\draw[blue] (Y5) -- (H5);
\draw[blue] (Y3) -- (H6);
\draw[blue] (Y1) -- (H2);

\draw[blue] (C3) -- (BC42)coordinate[pos=.5](I1);
\draw[blue] (CC42) -- (BC42) -- (CB32);
\draw[blue] (CC42) -- (I1) -- (CB32);
\draw[blue] (CB31) -- (I1);

\draw[blue] (B3) -- (BA32)  -- (BB32) -- (BA31) -- (BB33);

\draw[blue] (C3) -- (DD32)coordinate[pos=.5](K1);
\draw[blue] (CD42) -- (DD32) -- (DC32);
\draw[blue] (CD42) -- (K1) -- (DC32);
\draw[blue] (DC33) -- (K1);

\draw[blue] (D1) -- (CD32) -- (DC22) --   (CD31) -- (DC23) -- (CD30);

\draw[blue] (C2) -- (CB12)coordinate[pos=.5](L1);
\draw[blue] (BC22) -- (CB12) -- (CC22);
\draw[blue] (BC22) -- (L1) -- (CC22);
\draw[blue] (BC23) -- (L1);

\draw[blue] (A2) -- (BB22) -- (AB22) -- (BB21) -- (AB23) -- (BB20);

\path[coordinate] (D3) -- (C4)coordinate[pos=.5](DC42);
\draw[blue] (C3) -- (DC42);
\draw[blue] (CC42) -- (DC42) -- (CD32);
\draw[blue] (DC42) -- (D4);
\draw[blue] (DD32) -- (E3);

\draw[blue] (BC42) -- (B4);

\path[coordinate] (C1) -- (D1)coordinate[pos=.5](CD12);
\draw[blue] (C2) -- (CD12)coordinate[pos=.5](M1);
\draw[blue] (CC22) -- (CD12) -- (DC22);
\draw[blue] (CC22) -- (M1) -- (DC22);
\draw[blue] (M1) -- (DC23);

\draw[blue] (CD12) -- (D0);

\draw[blue] (CB12) -- (B0);

\draw[blue] (A2) -- (BA32) -- (AB22) -- (BA31) -- (AB23);

\end{tikzpicture}
\caption{Refinement propagation from uniform refinement of a single macro element.}
\label{fig:spider}
\end{figure}

\begin{remark}
From part (2) of the proof one can see, that if the mesh is uniformly refined up to one level below the  macro level, then the $+1$ from this part disappears and we get.
\[
N \leq \max_{\elm \in M}\max_{z \in \elm_0(\elm)} 3/4 \# N_z +3/4 \leq  \max_{z \in \gridk[0]} 3/4 \# N_z +3/4
\]
\end{remark}

\begin{remark}
\label{rem:P}
If we do not count all edges of $\gridk[0]$, but only those that are actually processor boundaries, we get the following bound that is better as long as every processor gets "nice" partitions.
\[
N \leq \max_{\elm \in M}\max_{z \in \elm_0(\elm)} \# P_z - 1 +1 = \max_{\elm \in M}\max_{z \in \elm_0(\elm)} \# P_z
\]
where $\# P_z$ is now the number of partitions that share a vertex. The proof is similar to the proof of theorem \ref{theorem:constant}, but we get $\# P_z -1$ as we cannot argue with circumventions in both directions and the $+1$ is again due to the final communication. Note that if a processors partition $p$ has several elements containing $z$ and there is no path of elements $\elm \in p$ containing  $z$ connecting two elements, every connected subdomain has to be counted as a partition.
\end{remark}

\begin{remark}
Theorem \ref{theorem:constant} provides a mesh constant. So no dependence on number of processors is required, but just a regular initial mesh (see $\alpha_0$ in Lemma \ref{L:NVB}).
\end{remark}

\begin{remark}
We believe that theorem \ref{theorem:constant} holds in a similar way for higher dimensions. It is a hard problem as the shape of the Refinement Propagation graph is not known.
\end{remark}

Based on the estimate in remark \ref{rem:P} we propose a new improved algorithm for compatible meshes.

\begin{algorithm}[H]
\caption{Improved Distributed Newest Vertex Bisection}
\label{algo:ImprDNVB}
Initialize set of elements marked for refinement on each Partition $M_i$, $0\leq i<P$.\\
\For{$i = 0, \ldots, max_{z \in \gridk[0]} \#P_z - 1$}{
	Refine Partition $P_i$ using \NVB until $M_i$ is empty\\
	Communicate refinement status of partition boundary to corresponding neighbour\\
	Add nonconforming simplices to $M_i$\\
	}
\end{algorithm}

For compatible meshes we have proven, that this algorithm yields the conforming closure and reaches the final status. So communicating the final status is no longer necessary. Hence we take the better bound $\max_{z \in \gridk[0]}\#P_z -1$ instead of $\max_{z \in \gridk[0]}\#P_z$. We can directly use the bound from Theorem \ref{theorem:constant} and get an algorithm, which does not need global communication at all. Unfortunately we cannot expect meshes to be compatible, especially in 3 dimensions. In this case 
we propose to use a mixture of both algorithms, where a fixed number of loops 
is done like in \ref{algo:ImprDNVB} before switching to the 
Algorithm \ref{algo:DNVB} after a global communication, whether $M_i \neq \emptyset \  \forall i$.


\section{Implementation}

The \NVB algorithm is implemented in the open-source 
package \dune[ALUGrid] available 
at \url{https://gitlab.dune-project.org/extensions/dune-alugrid}.
\cite{alugrid:16} contains a description of the software and various examples. 

In this section we discuss the implementation of the 
communication procedures which is not contained in 
detail in \cite{alugrid:16}.
Communication is needed to 
exchange the refinement flags, e.g. line 4 in
Algorithm \ref{algo:DNVB} and \ref{algo:ImprDNVB}. 
It is essential for the
Distributed \NVB that this is done in a very efficient way to guarantee excellent
scalability. In \dune[ALUGrid] we have chosen to interleave the send and receive procedures 
with the packing and unpacking of refinement information. This way we are able to hide
some of the communication latency behind the necessary pack and unpack of information. 
We briefly sketch the send and pack routine as well as the receive and
unpack routine used in \dune[ALUGrid]. \\

Let $\mathcal{L}^s_p$ be the set of all ranks that process $p \in [0,P-1]$ sends data
to and $\mathcal{L}^r_p$ the corresponding set $p$ received messages from. 
We call the communication symmetric if $\mathcal{L}^s_p = \mathcal{L}^r_p$.
Asymmetric communication occurs, for example, during the load balancing, where the list of
send and receive ranks can differ. The communication algorithm, however, is the same. 
Using the sets $\mathcal{L}^s_p, \mathcal{L}^r_p$ the
corresponding methods for \textit{pack-and-send} and \textit{receive-and-unpack} are
briefly explained in Algorithm \ref{alg:packandsend} and \ref{alg:receiveandunpack},
respectively. In the following $\mathcal{T}^q_p$ denotes the set of simplices on process $p$ with linkage
to rank $q \in \mathcal{L}^{s,r}_p$.
\begin{algorithm}[!ht]
  \caption{Pack and send}
  \label{alg:packandsend}
  \For{ $q \in \mathcal{L}^s_p$ }{
    \For{$T \in \mathcal{T}^q_p$}{
       pack refinement information for simplex $T$ and communication link $q$\\
     }
     post non-blocking \textbf{MPI\_Isend} for communication link $q$\\
   }
\end{algorithm}

\begin{algorithm}[!ht]
 \caption{Receive and unpack}
 \label{alg:receiveandunpack}
 \For{ $q \in \mathcal{L}^r_p$ }{
    $r_q \leftarrow 0$ \\
 }
 $n_r \leftarrow 0$ \\
 \While{ $n_r < |\mathcal{L}^r_p|$ }
 {
   \For{ $q \in \mathcal{L}^r_p$ }{
     \If{ $r_q = 0$ }{
      $r_q \leftarrow $ \textbf{MPI\_Iprobe}( $q$ ) \\
       \If{ $r_q = 1$ }{
         $s \leftarrow $ \textbf{MPI\_Getcount}( $q$ ) \\
         resize buffer for received message size $s$\\ 
         post \textbf{MPI\_Recv}( $q$ ) to write message from $q$ to buffer \\
         \For{$T \in \mathcal{T}^q_p$}{
         \textbf{unpackData}( $T$, $q$ ) 
        }
        $n_r \leftarrow n_r + 1$ 
       }
     }
   }
 }
 \textbf{MPI\_Waitall}( $\mathcal{L}^s_p$ ), see Algorithm \ref{alg:packandsend} 
\end{algorithm}

Both algorithms are implemented in \dune[ALUGrid] and work for very general data sets
and are thus used for all point to point communications in the package. 

\section{Numerical Experiments}

In this section we show for various examples that the theoretical results can be
reproduced and that very good scalability for the adaptation algorithm is observed on
a petascale super computer.  

\subsection{Verification of theoretical results}
The first experiment aims to reflect the theoretical results as we 
construct a worst-case experiment. A mesh is partitioned such that 
each process gets exactly one macro element. Then we refine a single element at 
one of its vertices up to level 20 and we examine the number of iterations 
necessary to reach the conforming status.

\begin{figure}[!ht]
\includegraphics[width=0.4\linewidth]{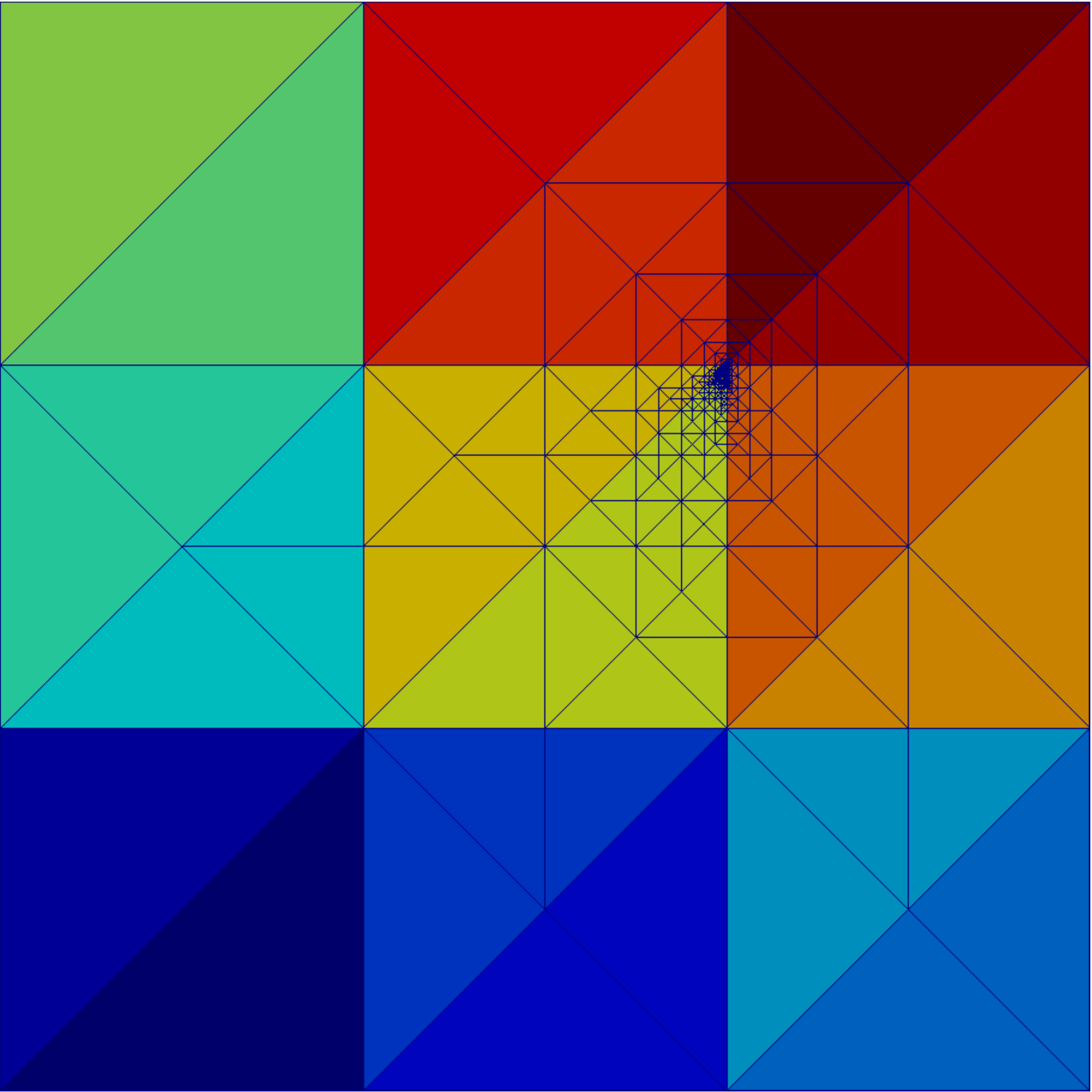} \hfil \begin{tikzpicture}[scale=0.6]
	\begin{axis}[
		xlabel=Max Level,
		ylabel=Refinement Loops,
		ymax=7
	]
	\addplot file{extractCompGrid18procs.txt};
	\addplot[red,sharp plot,update limits=false]  coordinates{(-5, -4) (30,31)};	
	\addplot[red,sharp plot,update limits=false]  coordinates{(-4, 6.25) (32,6.25)};		
	\end{axis}
\end{tikzpicture} 
\caption{A 2d unit square with 18 macro elements on 18 processors. The central yellow element gets refined at the top vertex. The number of macro neighbours $\# N_z = 6$. In the right plot red lines denote the two bounds from the Theorems \ref{theorem:constant} and \ref{theorem:maxlevel}.}
\label{fig:2dCompExp}
\end{figure}

From figure \ref{fig:2dCompExp} we see that for compatible 2d grids both bounds(red lines) are not violated by the current implementation. As this is a worst-case scenario, in the average case the behaviour is better. Now we perform the same experiment on a non-compatible grid.

\begin{figure}[!ht]
\includegraphics[width=0.4\linewidth]{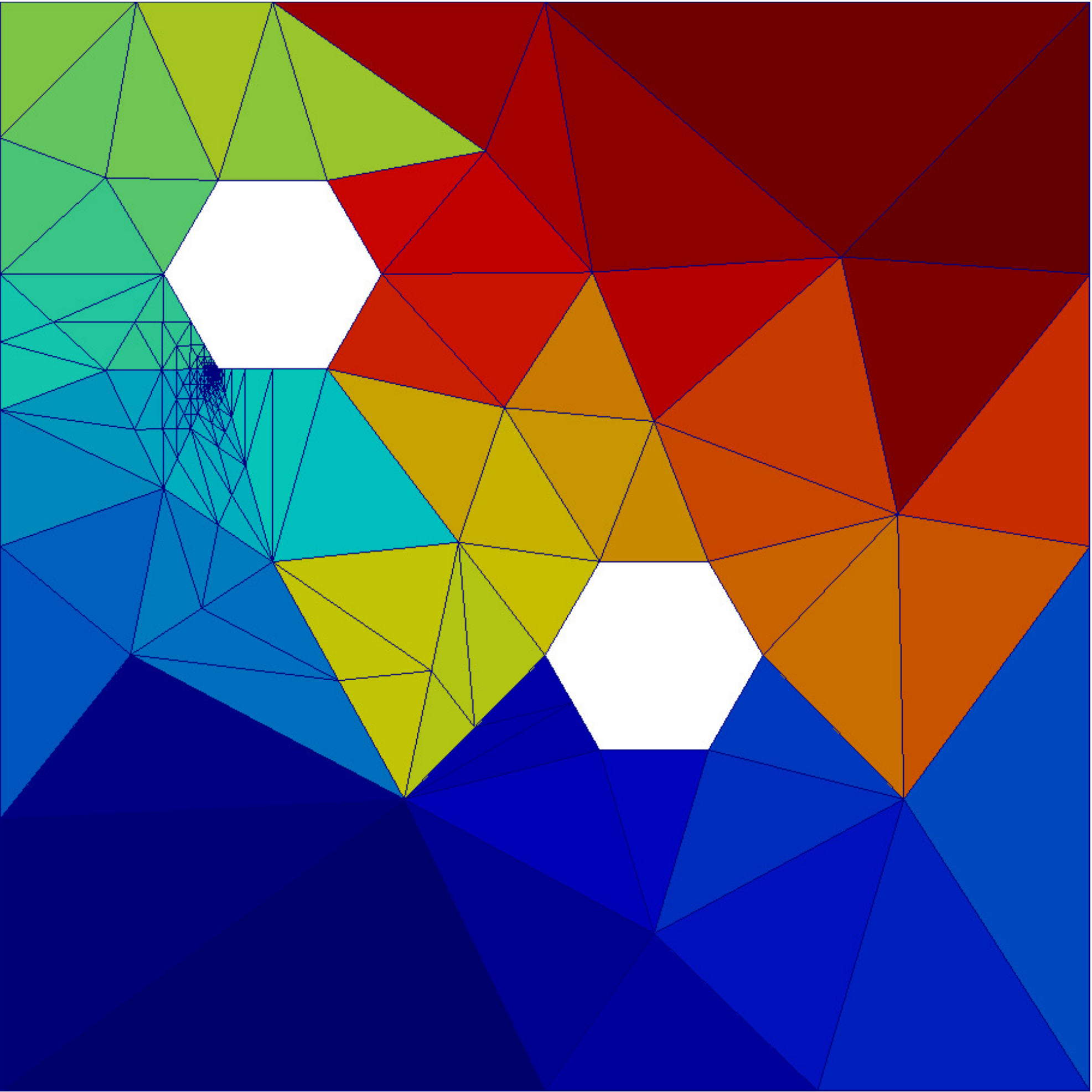} \hfil \begin{tikzpicture}[scale=0.6]
	\begin{axis}[
		xlabel=Max Level,
		ylabel=Refinement Loops,
		ymax=9
	]
	\addplot file{extractnonCompGrid60procs.txt};
	\addplot[red,sharp plot,update limits=false]  coordinates{(-5, -4) (30,31)};	
	\addplot[red,sharp plot,update limits=false]  coordinates{(-4, 6.25) (32,6.25)};		
	\end{axis}
\end{tikzpicture} 
\caption{A non-compatible 2d grid with 60 macro elements on 60 processors. An element (with $\# N_z = 6$) gets refined at a vertex. In the right plot red lines denote the two bounds from the Theorems \ref{theorem:constant} and \ref{theorem:maxlevel}.}
\label{fig:2dNonCompExp}
\end{figure}

As expected, the plot in Figure \ref{fig:2dNonCompExp} illustrates that compatibility is a necessary condition for both theorems. This means although the implementation is capable of handling non-compatible 2d grids, the bounds from the theory do not hold.

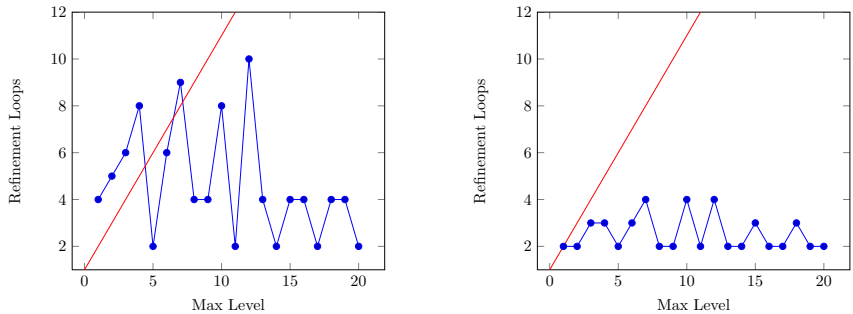
\begin{figure}[!ht]
\begin{tikzpicture}[scale=0.6]
	\begin{axis}[
		xlabel=Max Level,
		ylabel=Refinement Loops,
		ymax=12
	]
	\addplot file{extractCompGrid162procs.txt};
	\addplot[red,sharp plot,update limits=false]  coordinates{(-5, -4) (30,31)};	
	\end{axis}
\end{tikzpicture} 
\hfil
\begin{tikzpicture}[scale=0.6]
	\begin{axis}[
		xlabel=Max Level,
		ylabel=Refinement Loops,
		ymax=12
	]
	\addplot file{extractCompGridEdges162procs.txt};
	\addplot[red,sharp plot,update limits=false]  coordinates{(-5, -4) (30,31)};	
	\end{axis}
\end{tikzpicture} 
\caption{Unit cube. 162 Macro Elements/ 162 processors ($3 \times 3 \times 3$ Kuhn-dice). Central element gets refined at one vertex. On the left refinement status is communicated on faces. On the right refinement status is first communicated over edges and then additionally over faces. }
\label{fig:3dCompExp}
\end{figure}

We expect the bound from Theorem \ref{theorem:maxlevel} to hold in any dimension. This is not the case in figure \ref{fig:3dCompExp}. This failure arises from an implementation detail, that refinement status 
is communicated for faces instead of edges. 
After implementation of an additional communication of edges statuses during refinement 
we see that the theoretical result holds. In addition, the plot indicates 
the existence of a constant to bound the number of 
iterations in 3d similar to Theorem \ref{theorem:constant}.

\subsection{Strong scaling experiments}

In Figure \ref{fig:2dball} we show the refinement of a doughnut that rotates
around the center (doughnut refinement). Triangles inside the doughnut are refined and
triangles outside are coarsened. This test was introduced in \cite{alugrid:16} and 
serves as an excellent test 
for the Distributed \NVB since frequent refinement and
coarsening occurs throughout the simulation. In fact, the adaptation and load balancing
is performed in each time step. The domain decomposition is based on the Hilbert space
filling curve approach implemented in Zoltan \cite{zoltan}

Figures \ref{fig:2dscaling} and \ref{fig:3dscaling} we provide strong scaling results obtained for the 
doughnut refinement test in 2d and 3d, respectively. The scaling experiment have been
performed on the super computer Yellowstone \cite{Yellowstone}.
The observed strong scaling for the adaptation cycle (green triangular line) 
is excellent for all experiments and very close to the optimal scaling. The load
balancing (pink boxed line) on the other hand at some points fails to scale well because 
the number of macro simplices per core becomes to small as the number of processes
grows which results from the drawback of basing the partitioning upon the macro mesh. 
This is currently under investigation and will be reported in a separate article. 
In contrast to \cite[Fig. 8]{amdis:15}, where for a different adaptation experiment 
the \textit{mesh adaptation loop} yielded non-optimal strong scaling, 
we conclude from our investigations that the Distributed \NVB scales well 
and depending on the macro mesh a fixed or very limited number global communications is needed. 


\begin{figure}[!ht]
  \includegraphics[width=0.18\textwidth]{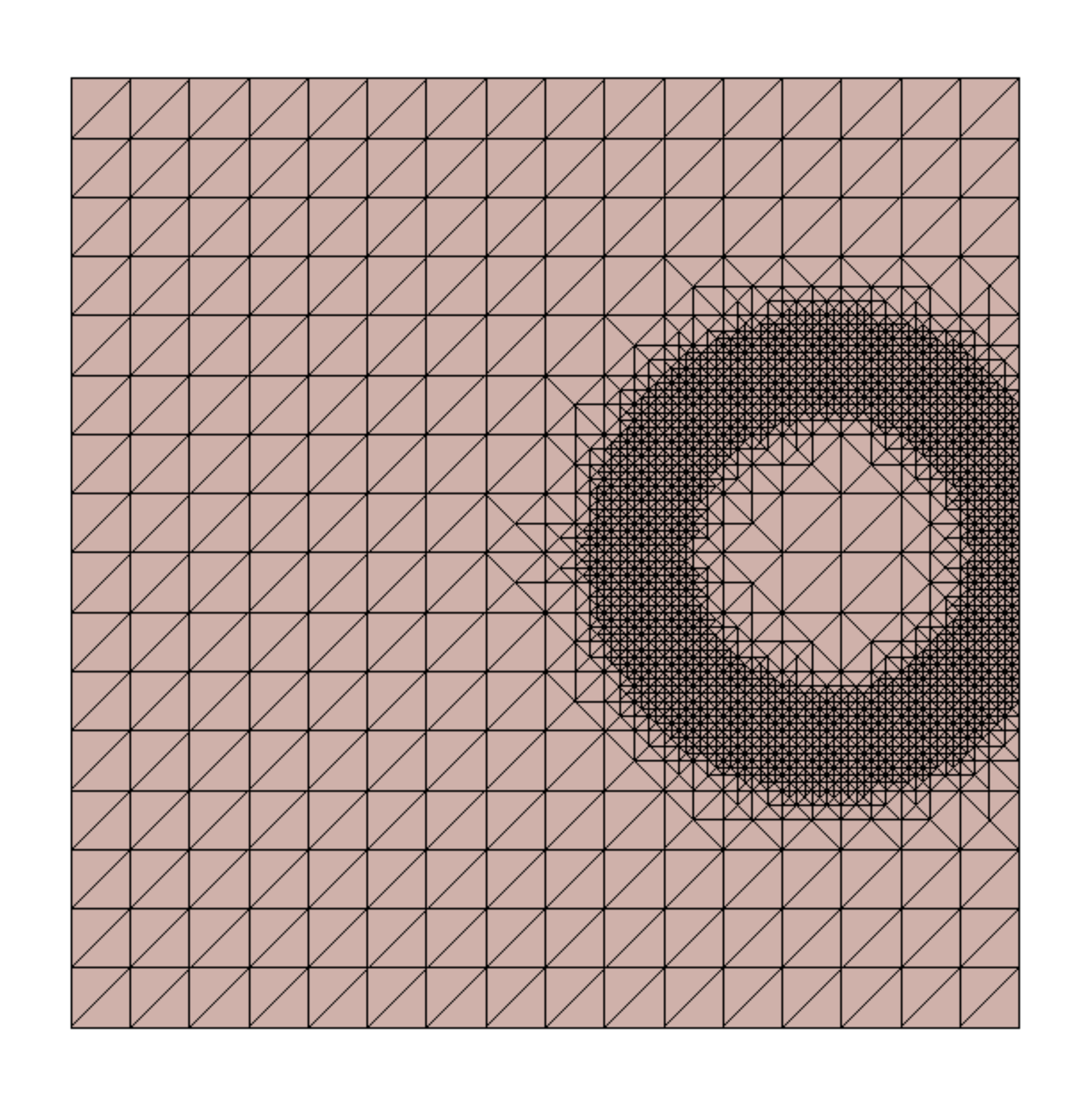}
  \includegraphics[width=0.18\textwidth]{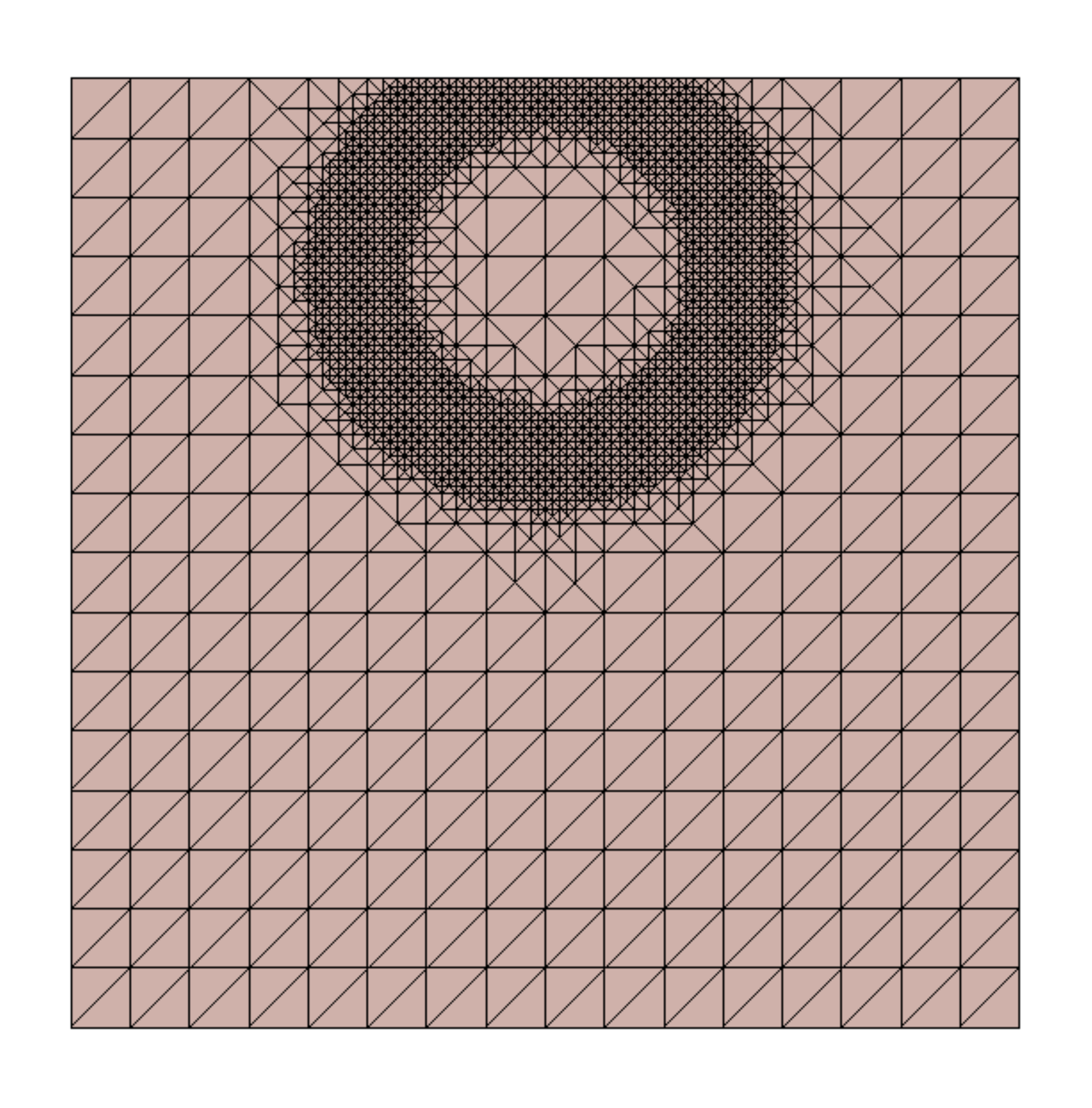}
  \includegraphics[width=0.18\textwidth]{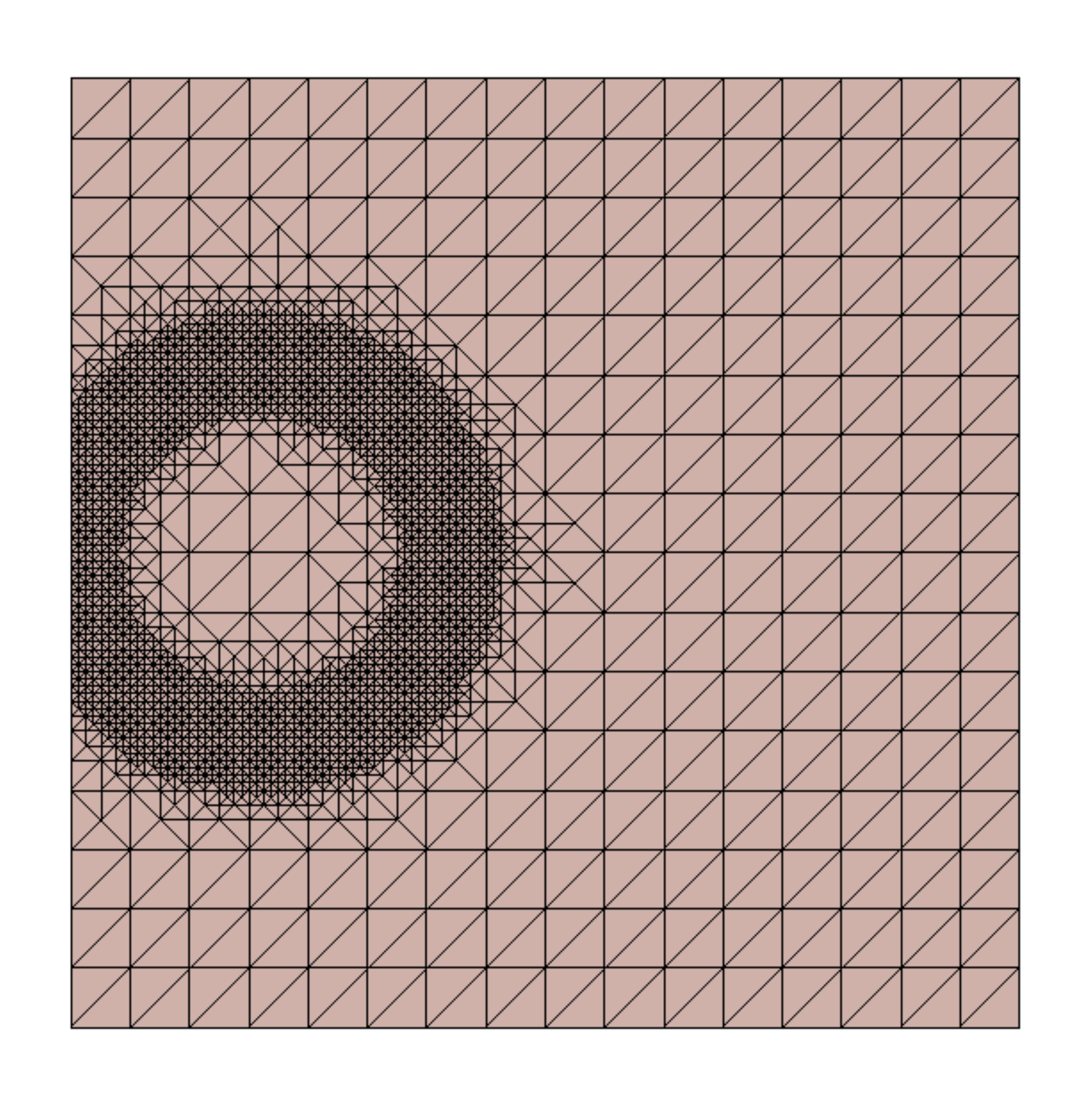}
  \includegraphics[width=0.18\textwidth]{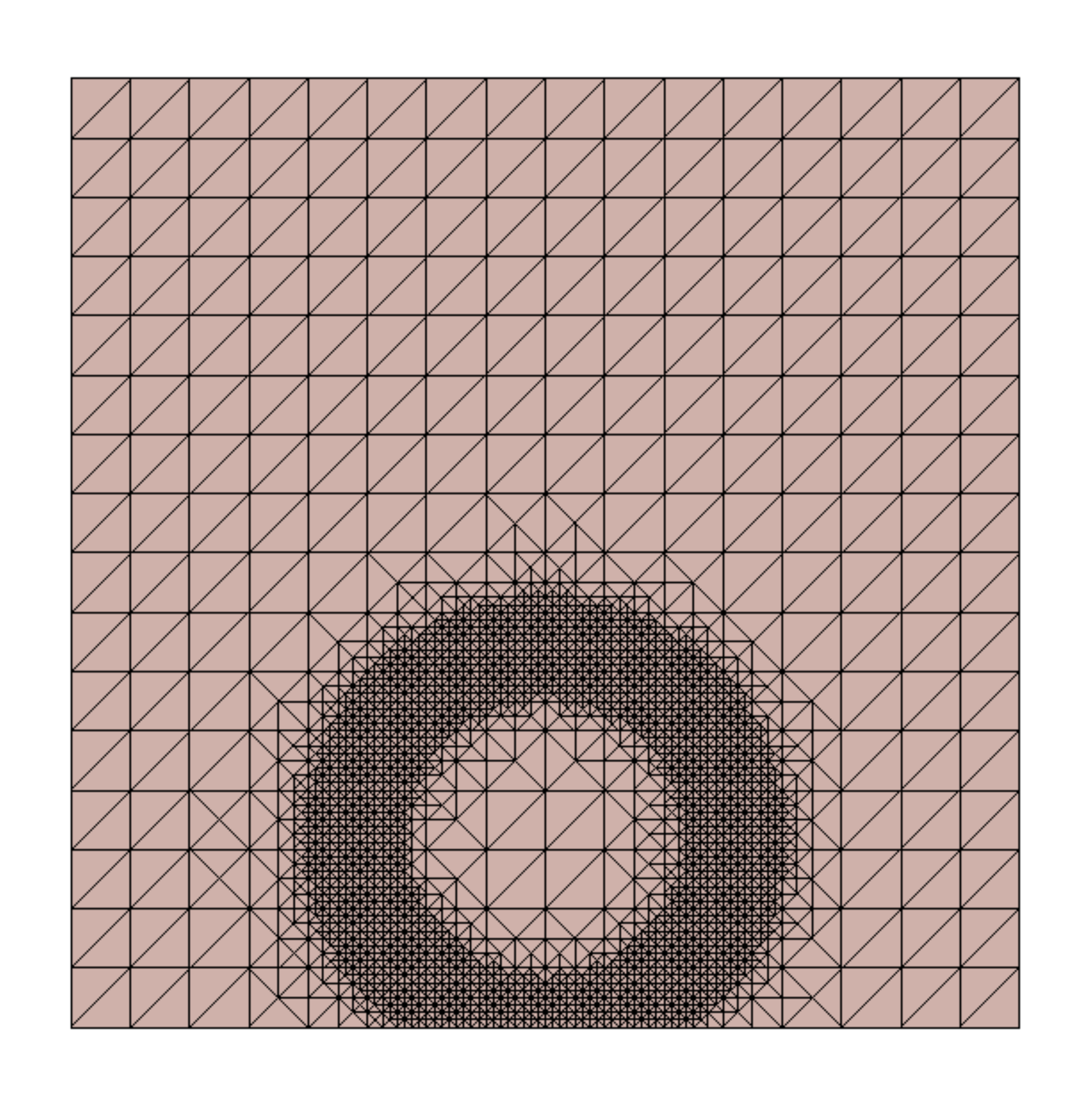}
  \includegraphics[width=0.18\textwidth]{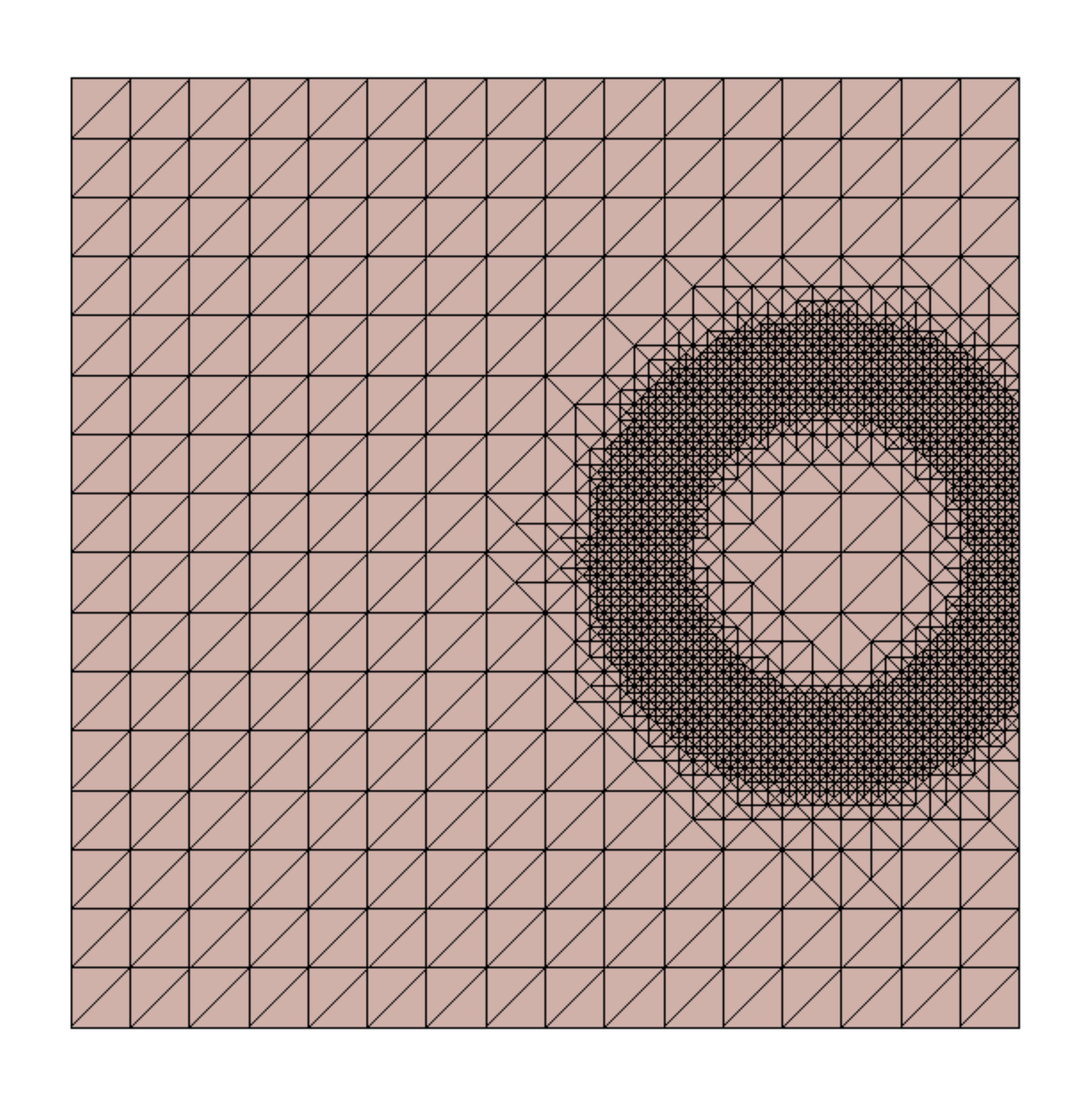}
  \caption{Refinement and coarsening of a doughnut like area 
    that rotates around the center. From left to
  right the simulation time is increased from $t=0$ to $t=1$ by $\Delta t = 0.25$.}
  \label{fig:2dball}
\end{figure}

\begin{figure}[!ht]
  \includegraphics[width=0.48\textwidth]{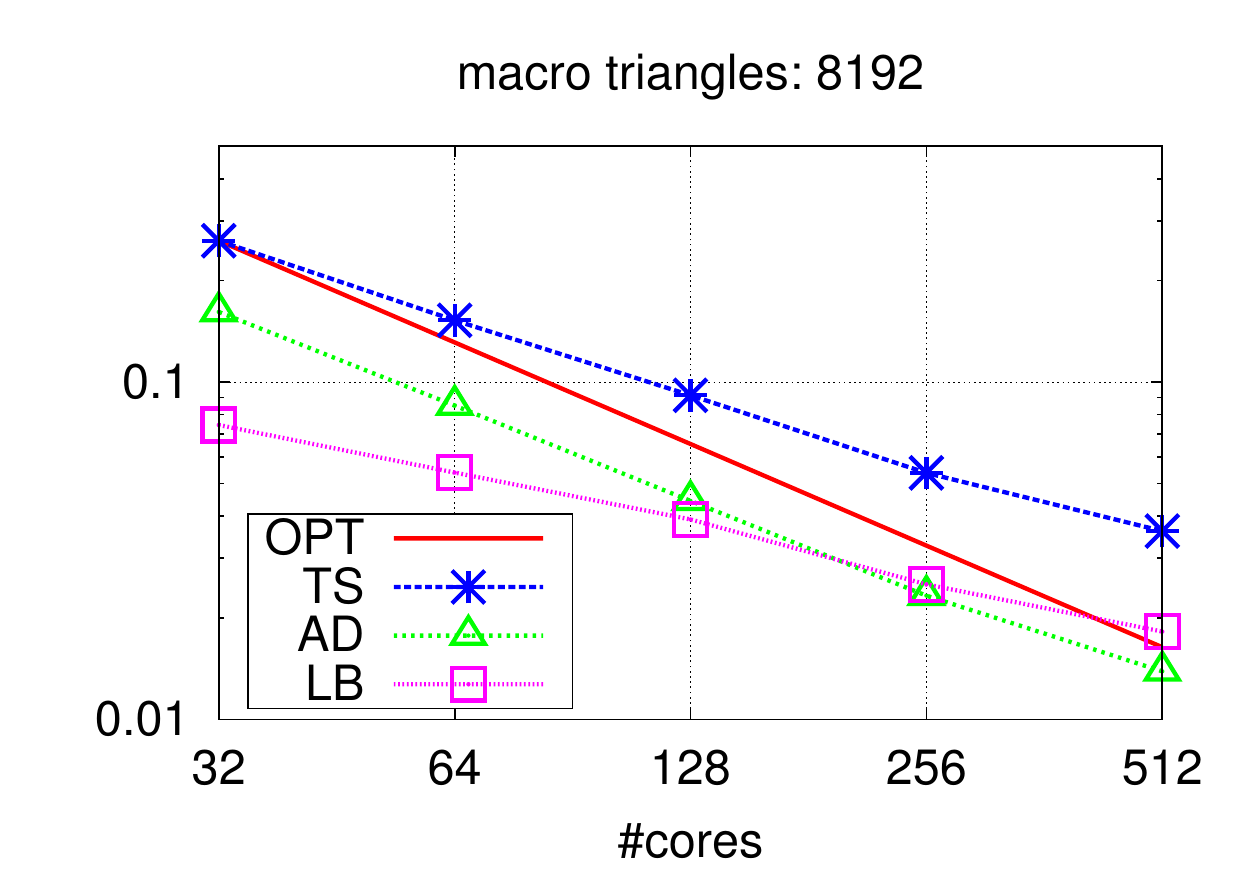}
  \includegraphics[width=0.48\textwidth]{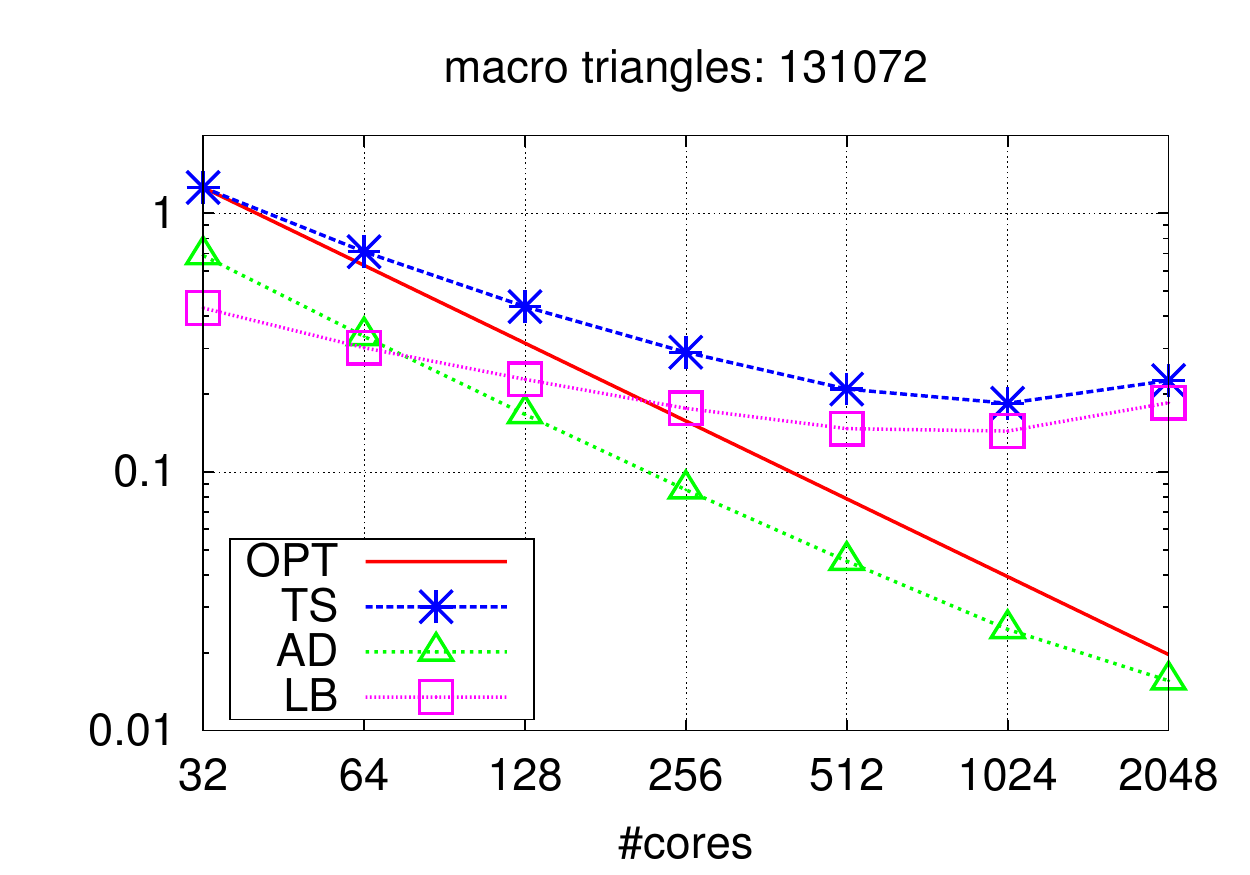}
  \caption{2d results for the doughnut refinement test on a coarse triangular macro mesh (left)
  and a finer triangular macro mesh (right). 
  For different number of cores the graph shows average run time per timestep in seconds 
  for the different parts of the algorithm: 
  a full time step (TS), the adaptation loop (AD),
  the load balancing (LB), and the expected optimal scaling (OPT).
  The scaling study has been performed on Yellowstone \cite{Yellowstone}.
  The test case is part of the \dune[ALUGrid] code and described in \cite{alugrid:16}.}
  \label{fig:2dscaling}
\end{figure}  

\begin{figure}[!ht]
  \includegraphics[width=0.48\textwidth]{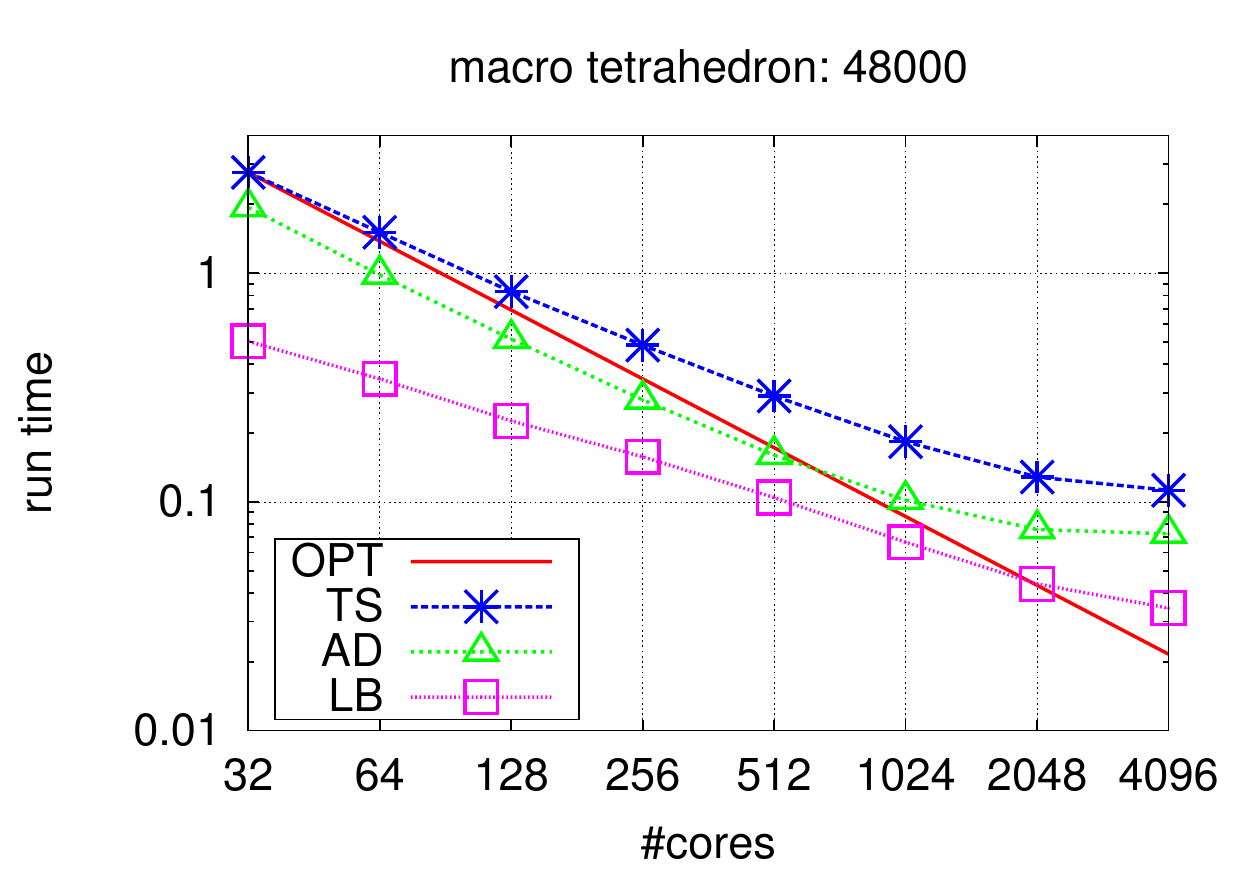}
  \includegraphics[width=0.48\textwidth]{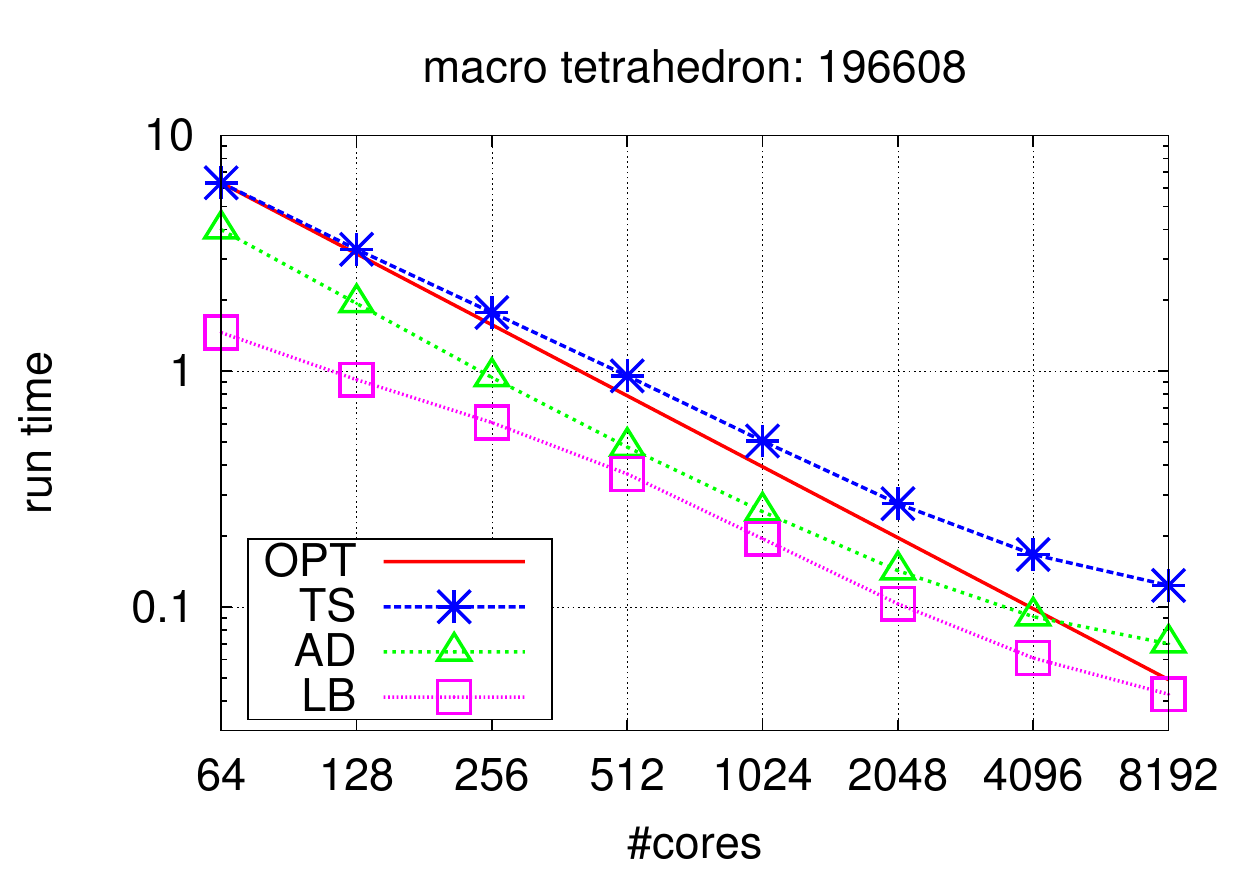}
  \caption{3d results for the doughnut refinement test on a coarse tetrahedral macro mesh (left) 
  and a finer tetrahedral macro mesh (right).
  For different number of cores the graph shows average run time per timestep in seconds 
  for the different parts of the algorithm: 
  a full time step (TS), the adaptation loop (AD),
  the load balancing (LB), and the expected optimal scaling (OPT).
  The scaling study has been performed on Yellowstone \cite{Yellowstone}.
  The test case is part of the \dune[ALUGrid] code and described in \cite{alugrid:16}.}
  \label{fig:3dscaling}
\end{figure}

\section{Summary}
We have shown (and proven in 2d), 
that the number of iterations in Algorithm \ref{algo:DNVB} to reach a 
conforming situation is bounded. In particular for grid implementations that do 
not partition the mesh on the leaf level, but on a certain fixed level, 
the bound is constant and independent of the current refinement situation. 
As the purpose of conforming grids is usually solving elliptic equations, 
the total run time is usually dominated by solving the equation. 
The performed \textit{worst-case} experiments are reflected by the theory. 
These experiments also prove that the compatibility condition in 
Assumption~\ref{A:initial_grid} is essential 
and cannot be neglected. 
In addition, we presented a state of the art implementation 
of the Distributed \NVB including asynchronous communication 
which is needed to achieve excellent scaling on a petascale super computer.

As a next step we will improve the flexibility of the load balancing algorithm which
currently only allows to partition the macro mesh. For certain problems where singularities
might arise, this might not be sufficient and yield poor scalability. 

\section*{Acknowledgements}

Martin Alk\"amper acknowledges the Cluster of Excellence in Simulation Technology (SimTech) at 
the University of Stuttgart for financial support.

Robert Kl\"ofkorn acknowledges NCAR/CISL's Research and Supercomputing Visitor Program (RSVP) and   
the Research Council of Norway and the industry partners --
ConocoPhillips Skandinavia AS, BP Norge AS, Det Norske Oljeselskap AS, Eni Norge AS, Maersk Oil Norway AS,
DONG Energy A/S, Denmark, Statoil Petroleum AS,
ENGIE E\&P NORGE AS, Lundin Norway AS, Halliburton AS,
Schlumberger Norge AS, Wintershall Norge AS -- of The National IOR Centre of Norway for financial support.

\end{document}